\documentclass[copyright,creativecommons]{eptcs}

\usepackage{amsthm,enumerate,amsmath,amscd,amssymb}
\usepackage{comment}
\usepackage{rotating}
\usepackage{color}
\usepackage{xcolor}
\usepackage{colortbl}

\newtheorem{lemma}{Lemma}
\newtheorem{definition}{Definition}
\newtheorem{corollary}{Corollary}
\newtheorem{proposition}{Proposition}
\newtheorem{example}{Example}
\newtheorem{theorem}{Theorem}
\newtheorem{conjecture}{Conjecture}
\newtheorem{question}{Question}
\newtheorem{remark}{Remark}
\newtheorem{problem}{Problem}

\newcommand{\R}{\mathbb{R}}
\newcommand{\Q}{\mathbb{Q}}
\newcommand{\Z}{\mathbb{Z}}
\newcommand{\C}{\mathbb{C}}
\newcommand{\N}{\mathbb{N}}
\newcommand{\abs}[1]{\left\vert #1 \right\vert}
\newcommand{\si}[1]{\begin{sideways} #1 \end{sideways}}
\newcommand{\diff}[1]{{\color{red} #1}}
\newcommand{\same}[1]{{#1}}

\def\myd#1{\text{\d{\ensuremath#1}}}

\newcommand{\CA}{\mathtt{CA}}
\newcommand{\SUR}{\mathtt{SUR}}
\newcommand{\REV}{\mathtt{REV}}
\newcommand{\PRES}{\mathtt{PRES}}
\newcommand{\ID}{\mbox{id}}
\newcommand{\INF}{{}^\infty}

\title{Topology Inspired Problems for Cellular Automata, and a Counterexample in Topology\thanks{Research supported by the Academy of Finland Grant 131558}}

\author{
	Ville Salo
		\institute{TUCS -- Turku Center for Computer Science, \\
		University of Turku, Finland}
		\email{vosalo@utu.fi}
	\and
	Ilkka T\"orm\"a
		\institute{University of Turku, Finland}
		\email{iatorm@utu.fi}
}

\begin{document}
\maketitle

\begin{abstract}
We consider two relatively natural topologizations of the set of all cellular automata on a fixed alphabet. The first turns out to be rather pathological, in that the countable space becomes neither first-countable nor sequential. Also, reversible automata form a closed set, while surjective ones are dense. The second topology, which is induced by a metric, is studied in more detail. Continuity of composition (under certain restrictions) and inversion, as well as closedness of the set of surjective automata, are proved, and some counterexamples are given. We then generalize this space, in the sense that every shift-invariant measure on the configuration space induces a pseudometric on cellular automata, and study the properties of these spaces. We also characterize the pseudometric spaces using the Besicovitch distance, and show a connection to the first (pathological) space.
\end{abstract}

\section{Introduction}

Cellular automata are a class of discrete dynamical systems. They are defined on the set of all two-directional infinite sequences of symbols from a finite alphabet, and the dynamics are given by a local function, which is synchronously applied at each coordinate. Cellular automata have been previously studied at least in the contexts of algorithmics, computability and dynamical systems theory \cite{Ka05}. The set of all cellular automata, or even a large subset of it, has been studied in relatively few articles. Usually, such works have been concerned with the algebraic structure of the automorphism group or endomorphism monoid of a subshift \cite{BoLiRu88,BoFrKi90,Ho10}.

In the article \cite{SaTo12}, we show that when the full shift is given a certain natural topology (the Besicovitch topology), cellular automata are exactly the continuous functions on it which compute the image of some cellular automaton at every point. In particular, the cellular automaton used cannot even be `changed' at, say, unary points of the full shift, where not much information about the cellular automaton is shown in the image. The crucial idea in the proof is topologizing the set of all cellular automata, although the choice of topology is rather arbitrary. This is the only use of a topology on the space of cellular automata we are aware of. Since the space is countable, one might assume that topology is necessarily useless. A good counterexample to this intuition is found for instance in \cite{Fu55}, where the infinitude of primes is proved using topology, in a rather beautiful way.

In hope of finding other uses for topologizing cellular automata, we introduce two naturally arising topologies for the set of cellular automata. One of these topologies, defined in terms of pointwise behavior, turns out to be rather pathological, in that it gives a countable space which is not first-countable. Some other such examples can be found in \cite{SeSt95}, but they are all based on a different idea, and have quite distinct properties from our space.

In addition to the naturally arising pathological example, we define a family of topologies which not only arise naturally, but also behave that way. The topologies are given by measures of the configuration space, and we call them difference set topologies. The topologies arise from measuring the difference of preimages of cylinders, and thus they measure the difference of two CA in one step of evolution. As one might guess, it turns out convergence in such a topology does not imply that the limit shares the dynamical properties of the elements of the converging sequence. However, many `one-step' properties play well with our topology: in the uniform Bernoulli case, which is the most natural representative of the family of topologies, we prove that surjective cellular automata are a closed set using the balance property, that inversion is continuous, and that composition is continuous under certain natural restrictions. Also, our proofs of perfectness and non-closedness of well-known subspaces use interesting approximation results: for instance, we turn the XOR automaton invertible with an arbitrarily small change in its local function. This was also one of the key (although also one of the simplest) ideas in the proof of \cite{Ka90}.

The paper is organized as follows. Section~\ref{sec:Defs} consists of the definitions and notation used in this paper. In Section~\ref{sec:Problems}, we briefly consider some topologies on cellular automata which seem promising at first, but fail to satisfy some natural constraints. In particular, we show that in a topology defined by pointwise behavior, composition is far from continuous, and discuss its other pathological properties. In Section~\ref{sec:Combinatorics}, we give a combinatorial description of the difference set topology given by the uniform Bernoulli measure, and present some of its properties with combinatorial proofs. In Section~\ref{sec:Measures}, we give the general definition of the difference set topologies for arbitrary shift-invariant measures. We study the properties of such topological spaces, and rephrase some of the combinatorial results in terms of measures. Section~\ref{sec:Future} consists of our conclusions and directions for future work.

\section{Definitions}
\label{sec:Defs}

Let $\Sigma$ be a finite set, called the \emph{state set} or \emph{alphabet}, which we assume to always have cardinality greater than $1$. The set $\Sigma^\Z$ of bi-infinite state sequences, or \emph{configurations}, is called the \emph{full shift on $\Sigma$}. If $x \in \Sigma^\Z$, then we denote by $x_i$ the $i$th coordinate of $x$, and we adopt the shorthand notation $x_{[i,j]} = x_i x_{i+1} \ldots x_j$.  A \emph{word} is an element of $\Sigma^n$ for some $n$, and we write $\Sigma^*$ for the set of all words. We define $N(r)$ for the interval $[-r, r]$, and call $w \in \Sigma^{N(r)}$ a \emph{centered word}, denoting $r(w) = r$. We use the notation $\Sigma^\leftrightarrow$ for the set of all centered words. The indexing notations $x_i$ and $x_{[i,j]}$ are extended to (centered) words, with $v_0$ being the central coordinate of a word $v \in \Sigma^\leftrightarrow$. For $v \in \Sigma^\leftrightarrow$ and $u, w \in \Sigma^*$ with $|u| = |w|$, we write $uvw$ for the centered word sharing its center with $v$, and $v_{[i, j]}$ for the word $v_i \cdots v_j$. For words $u,v,w \in \Sigma^*$, the notation ${}^\infty u \myd v w^\infty$ defines in a natural way a configuration which is $u$-periodic to the left and $w$-periodic to the right, with the word $v$ starting at coordinate $0$.

We define a metric $d_C$, called the Cantor metric, on the full shift by
\[ d_C(x,y) = \sum_{x_i \neq y_i} 2^{-|i|} \]
for all $x,y \in \Sigma^\Z$. This definition makes $\Sigma^\Z$ a compact metric space. We define the \emph{shift map} $\sigma : \Sigma^\Z \to \Sigma^\Z$ by $\sigma(x)_i = x_{i+1}$. For a word $w$ (centered or not) and $n \in \Z$, we define the \emph{cylinder of $w$ at $n$} by $[w]_n = \{ x \in \Sigma^\Z \;|\; x_{n+i} = w_i \mbox{ for all relevant } i \}$. If $W$ is a set of words, then $[W]_n = \bigcup_{w \in W} [w]_n$. If $W$ is finite, these sets are also called cylinders, and they form a clopen (closed and open) basis for the topology of $\Sigma^\Z$.

A \emph{cellular automaton} is a continuous function $c : \Sigma^\Z \to \Sigma^\Z$ with the property $c \circ \sigma = \sigma \circ c$. Alternatively, cellular automata are defined by \emph{local functions} $F : \Sigma^{N(r)} \to \Sigma$ for some $r \in \N$ such that $c(x)_i = F(x_{[i-r,i+r]})$. Any $r$ that can be chosen for the local function is called a \emph{radius} of $c$, and the minimal radius is denoted $r(c)$. We also denote $N(c) = N(r(c))$. A cellular automaton $c$, being continuous, defines a dynamical system $(\Sigma^\Z, c)$. The notation $\CA$ stands for the set of all CA on $\Sigma^\Z$. We denote by $\SUR$ and $\REV$ the sets of surjective and injective cellular automata, respectively. It is known (see \cite{He69}) that $\REV \subset \SUR$, and that bijective cellular automata are in fact reversible, that is, the inverse function is in $\CA$ as well. Also, surjective cellular automata are exactly those $c \in \CA$ that exhibit the so-called \emph{balance property} that $|c^{-1}(w)| = |\Sigma|^{2r(c)}$ for all $w \in \Sigma^*$ with $|w| \geq 1$, when $c$ is considered as a word function $c : \Sigma^{|w|+2r(c)} \to \Sigma^{|w|}$.

For $u, v \in \Sigma^\leftrightarrow$, we denote $u \prec v$ if $v = sut$ for some $s,t \in \Sigma^*$, and $u \sim v$ if $u_0 = v_0$. We always assume $\tau$ is a permutation of $\Sigma$ without fixed points. When $\Sigma = \{0, 1\}$, the permutation $a \mapsto \bar{a}$ is defined as $(0\;1)$ in cycle notation.

We also need some notions for topological spaces and their subsets, mostly in Section~\ref{sec:Problems}. A subset of a topological space is \emph{sequentially closed}, if it contains the limits of its converging sequences. A topological space $X$ is said to be
\begin{itemize}
\item \emph{metrizable}, if the topology can be induced by a metric,
\item \emph{normal}, if any two disjoint closed subsets of $X$ can be separated with open neighborhoods,
\item \emph{completely regular}, if any closed set $F$ of $X$ and a point $x \notin F$ can be separated by a continuous function $f : X \to [0, 1]$ in the sense that $f(x) = 0, f(F) = \{1\}$,
\item \emph{sequential}, if its sequentially closed subsets of $X$ are exactly the closed sets,
\item \emph{of first category}, if it is a countable union of nowhere dense sets,
\item \emph{totally disconnected}, if its connected components are singletons,
\item \emph{zero dimensional}, if it has a clopen base,
\item \emph{first-countable}, if every point has a countable neighborhood basis,
\item \emph{second-countable}, if the topology has a countable base.
\end{itemize}
A function $f : X \to Y$ between metric spaces $(X,d)$ and $(Y,d')$ is \emph{Lipschitz} if $d'(f(x),f(y)) < C \cdot d(x,y)$ for some constant $C > 0$. In particular, Lipschitz functions are continuous.

In this article, a \emph{measure} is always understood as a probability measure on the Borel subsets of $\Sigma^\Z$. A measure $\mu$ is \emph{shift invariant} if $\mu(C) = \mu(\sigma(C))$ for all Borel sets $C \subseteq \Sigma^\Z$. It is known (see \cite{DeGrSi76}) that all measures are \emph{regular}, meaning that for every Borel set $C \subset \Sigma^\Z$ we have $\mu(C) = \inf \{ \mu(U) \;|\; C \subseteq U, U \mbox{ open} \}$. For two measures $\mu$ and $\nu$, we denote $\mu \gg \nu$ if $\mu(C) = 0$ implies $\nu(C) = 0$.

A shift invariant measure $\mu$ is \emph{ergodic}, if $\sigma(C) = C$ implies $\mu(C) \in \{0,1\}$. Examples of ergodic measures are the \emph{Bernoulli measures} $\mu_p$, given by a map $p : A \to [0,1]$ with $\sum_{a \in A} p(a) = 1$, which are defined by setting $\mu_p([w]_0) = \prod_{i=0}^{|w|-1} p(w_i)$. If $\mu$ is ergodic, $\nu$ is shift invariant and $\mu \gg \nu$, then $\mu=\nu$. The famous \emph{Birkhoff's pointwise ergodic theorem} states that if $\mu$ is a shift invariant measure and $f : \Sigma^\Z \to \C$ a $\mu$-integrable function, then
\[ f^*(x) = \lim_{n \rightarrow \infty} \frac{1}{2n+1} \sum_{i=-n}^n f(\sigma^i(x)) \]
is defined for $\mu$-almost all $x$. Also, the function $f^*$ is $\mu$-integrable and satisfies
\[ \int_{\Sigma^\Z} f^* \; d \mu = \int_{\Sigma^\Z} f \; d \mu. \]

\section{Problematic Topologies}
\label{sec:Problems}

When topologizing a set of objects, one usually wishes the topology to be somehow related to the nature of these objects, or some structure given to the set itself. In a topological group, for example, the group operations $(g,h) \mapsto g \cdot h$ and $g \mapsto g^{-1}$ are required to be continuous. In the case of cellular automata, a reasonable requirement is that \emph{composition should be continuous}, at least with some natural restrictions. In order to avoid trivialities, we also require that \emph{the space should be nondiscrete and Hausdorff}. In this section, we briefly discuss some topologies that fail to fulfill these properties, also noting a possibly interesting counterexample in topology.

\begin{definition}
The \emph{pointwise topology} on $\CA$ has the sets
\[ U_x(a) = \{c \in \CA \;|\; c(x)_0 = a\}, \]
indexed by $x \in \Sigma^\Z$ and $a \in \Sigma$, as a subbase. For $x \in \Sigma^\Z, c \in \CA$, we denote $U_x(c) = U_x(c(x)_0)$.
\end{definition}

This means that a sequence $(c_i)_{i \in \N}$ converges to $c \in \CA$ iff $(c_i(x))_{i \in \N}$ converges to $c(x)$ for all $x \in \Sigma^\Z$. Equivalently, $c_i \rightarrow c$ if and only if for all $x \in \Sigma^\Z$, we have $(c_i(x))_0 = c(x)_0$ for all large enough $i$. As we will see later, this does not characterize the pointwise topology. In the case $|\Sigma| = 2$, this space is a kind of `dual' of the usual Cantor space: The Cantor topology is given on the set of infinite sequences by taking the cylinders as a clopen base. In the pointwise topology, we take the cylinders as points (a CA corresponding to the cylinder it maps to $1$), and take as a subbase the sets of cylinders containing, or not containing, a given point.

The pointwise topology might seem like a very natural choice for topologizing the cellular automata, but already the composition operation fails to be continuous:

\begin{example}
Composition is not sequentially continuous (and thus not continuous) in the pointwise topology, even in $\REV$ with the alphabet $\{0,1\}$. Namely, let $i \in \N$, and consider the automata $c_i$ and $d_i$ that behave as the identity, except that $c_i$ transforms every pattern of the form $110^ia0^i11$ to $110^i\bar a0^i11$, and $d_i$ transforms every $0^{i+1}aa0^i10^iaa0^{i+1}$ to $0^{i+1}\bar a \bar a0^i10^i\bar a \bar a0^{i+1}$. It is routine to check that these automata are reversible (as they have period $2$), and that in the pointwise topology, both $(c_i)$ and $(d_i)$ converge to the identity automaton. However, $(c_i \circ d_i)$ does not, since the central cell of $c_i(d_i(\INF 0 \myd 1 0 \INF))$ is always $0$.
\end{example}

Of course, composition from either side with a constant cellular automaton is continuous.

The pointwise topology has some interesting properties: the reversible cellular automata form a closed set, and the set of surjective automata is dense. Note that since we have not proved that this space is sequential (and in fact it is not: see Theorem~\ref{thm:PointwiseProperties}), we cannot prove these propositions using sequences.

\begin{proposition}
\label{prop:REVYesClosed}
The set $\REV$ is closed in the pointwise topology.
\end{proposition}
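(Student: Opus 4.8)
The plan is to show that the complement $\CA \setminus \REV$ is open in the pointwise topology, by exhibiting, for each non-injective $c \in \CA$, a basic open neighborhood of $c$ consisting entirely of non-injective automata. The key fact to exploit is that non-injectivity of a cellular automaton is a \emph{finitary} phenomenon: if $c$ is not injective, then there exist two distinct configurations $x \neq y$ with $c(x) = c(y)$, and by a standard compactness/pigeonhole argument (the classical Hedlund-type characterization) one may take $x$ and $y$ to be spatially periodic and to differ in only finitely many coordinates — concretely, one can find periodic points of the form $x = \INF u \myd v w^\infty$ and $y = \INF u \myd v' w^\infty$ with $v \neq v'$ but $c(x) = c(y)$. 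Actually the cleanest route is to use asymptotic (diagonal) pairs: there is a single period $p$ and two distinct $p$-periodic configurations $x,y$ that agree outside a finite window and have the same image under $c$.

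The main step is then to verify that the property ``$c$ identifies this particular periodic pair $x,y$'' is detected by finitely many subbasic sets $U_z(a)$ and is \emph{open}. Given such a collapsing pair $x \neq y$ with $c(x) = c(y)$, pick any coordinate $j$ with $x_j \neq y_j$, and consider the shifted configurations; since $c$ commutes with $\sigma$, the equality $c(x) = c(y)$ is equivalent to $c(\sigma^k x)_0 = c(\sigma^k y)_0$ for all $k \in \Z$, and because $x,y$ are periodic and eventually equal, only finitely many distinct configurations $\sigma^k x$, $\sigma^k y$ arise up to the relevant coordinates — indeed, by periodicity the condition $c(x)=c(y)$ reduces to finitely many equalities $c(\sigma^k x)_0 = c(\sigma^k y)_0$ with $0 \le k < p$. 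Each such equality $c(\sigma^k x)_0 = a_k = c(\sigma^k y)_0$ says precisely that $c$ lies in $U_{\sigma^k x}(a_k) \cap U_{\sigma^k y}(a_k)$, a finite intersection of subbasic open sets. Any $c'$ in this finite intersection also satisfies $c'(x) = c'(y)$ by the same periodicity argument run in reverse, hence $c'$ is non-injective. Thus we have found an open neighborhood of $c$ inside $\CA \setminus \REV$, proving $\REV$ closed.

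I expect the main obstacle to be the reduction in the first step: turning an arbitrary collapse $c(x) = c(y)$ into one witnessed by a \emph{periodic} pair of configurations, so that the ``finitely many coordinates'' bookkeeping goes through. This is where one invokes compactness of $\Sigma^\Z$ together with the local nature of $c$ (radius $r(c)$): if $x \neq y$ with $c(x) = c(y)$, a diagonalization/König's-lemma argument over longer and longer agreeing windows produces a spatially periodic collapsing pair, or one simply notes that two distinct configurations with the same image must include a pair differing in only finitely many positions after suitable use of the balance/mixing structure — but the quickest self-contained version is: non-injective CA admit a ``diamond'', i.e.\ two finite words $v \neq v'$ of equal length with the same image and matching boundary overlaps of width $2r(c)$, which one then pads periodically to genuine configurations $\INF w \myd v w^\infty$ and $\INF w \myd v' w^\infty$. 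Once that combinatorial witness is in hand, everything else is routine checking that the relevant conditions are finite Boolean combinations of the subbasic sets $U_z(a)$, and that membership in them forces the collapse, hence non-injectivity, for every automaton in the neighborhood.
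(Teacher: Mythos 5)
Your overall route is the paper's: for each non-injective $c$, exhibit a finite intersection of subbasic sets all of whose members collapse a fixed pair of \emph{periodic} configurations. Indeed, once you have distinct $p$-periodic $x,y$ with $c(x)=c(y)$, your neighborhood $\bigcap_{0\le k<p}\bigl(U_{\sigma^k x}(a_k)\cap U_{\sigma^k y}(a_k)\bigr)$, where $a_k=c(\sigma^k x)_0$, works exactly as you say: for every $c'\in\CA$ the images $c'(x)$ and $c'(y)$ are again $p$-periodic, so $p$ coordinate equalities force $c'(x)=c'(y)$ regardless of the radius of $c'$. The paper does the same thing with the larger (still finite) set of \emph{all} $p$-periodic points.

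The genuine gap is in producing the periodic witness, the step you yourself flag as the main obstacle; as written it fails. First, the pair you posit cannot exist: two distinct $p$-periodic configurations never agree outside a finite window, since any single difference recurs with period $p$. Second, the claim that every non-injective CA admits a diamond is false: by the Garden of Eden theorem, a diamond (equivalently, a distinct asymptotic pair with equal images) exists iff $c$ is \emph{not surjective}, whereas e.g.\ the XOR automaton $c(x)_i=x_i+x_{i+1}\bmod 2$ is non-injective yet surjective, hence pre-injective and diamond-free. Third, even when a diamond exists, the padded configurations $\INF w\myd{v}w^\infty$ and $\INF w\myd{v'}w^\infty$ are asymptotic but not spatially periodic, and for such a pair no finite intersection of sets $U_z(a)$ can force $c'(x)=c'(y)$, because the radius of automata in the neighborhood is unbounded; your own verification genuinely uses periodicity. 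What you need (and what the paper cites without proof) is the classical fact that a CA is injective iff it is injective on spatially periodic configurations. A correct argument: if $c$ is injective on each finite set of $p$-periodic points, it permutes each of them, so its closed image contains the dense set of periodic points and $c$ is surjective; hence a non-injective $c$ is either non-surjective, in which case by the contrapositive it already fails injectivity on periodic points, or surjective, in which case a collapsing pair $x\neq y$ differs in infinitely many coordinates, and a pigeonhole on the pair-blocks $(x_{[i,i+2r(c)-1]},y_{[i,i+2r(c)-1]})$ yields two equal blocks enclosing a difference, between which one periodizes $x$ and $y$ to obtain the required distinct periodic pair with equal images.
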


\begin{proof}
Let $c$ be a point in $\CA - \REV$. Since a cellular automaton is reversible if and only if it is reversible on periodic points, we have that $c(x) = c(y)$ for some $p$-periodic points $x$ and $y$. Now, let $X_p = \{z \;|\; \sigma^p(z) = z\}$, and consider the set $U = \bigcap_{z \in X_p} U_z(c)$, which is an open neighborhood of $c$. Then, $d \in U \implies d(x) = d(y)$, and thus $U \subset \CA - \REV$. 
\end{proof}

\begin{proposition}
\label{prop:SURDense}
The set $\SUR$ is dense in the pointwise topology.
\end{proposition}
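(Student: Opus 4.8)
The plan is to verify that every nonempty basic open set of the pointwise topology contains a surjective automaton, which suffices for density. Such a set has the form $V = \bigcap_{j=1}^{n} U_{x_j}(a_j)$ for configurations $x_j \in \Sigma^\Z$ and symbols $a_j \in \Sigma$, and $V \neq \emptyset$ means precisely that there is a $c \in \CA$ with $c(x_j)_0 = a_j$ for all $j$; in particular $x_j = x_k$ forces $a_j = a_k$. After discarding repeated configurations we may therefore assume the $x_j$ are pairwise distinct, and it remains to produce a \emph{surjective} $d \in \CA$ with $d(x_j)_0 = a_j$ for each $j$, for the given (now unconstrained) targets $a_j$.

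First I would pick a radius $R$ large enough that the length-$2R$ words $p_j := (x_j)_{[-R,R-1]}$ are pairwise distinct; this is possible since distinct configurations differ at some coordinate and only finitely many pairs must be separated. Next I would build a local rule $G : \Sigma^{N(R)} \to \Sigma$ that is right-permutive, i.e.\ such that for every fixed assignment to the coordinates $-R, \ldots, R-1$, the map sending the value at coordinate $R$ to the output is a bijection of $\Sigma$. The key point is that such a $G$ is constrained only within each fixed ``row'' $(s_{-R}, \ldots, s_{R-1})$, and the requirements coming from $V$ assign to the row $p_j$ the single condition $G\big((x_j)_{[-R,R]}\big) = a_j$; since the $p_j$ are distinct, no row carries two requirements, so each such requirement extends to a bijection of $\Sigma$ (use the transposition swapping $(x_j)_R$ and $a_j$, or the identity if they agree), and every remaining row is assigned the identity bijection. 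Let $d$ be the cellular automaton induced by $G$.

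Two routine facts then finish the argument. Right-permutive automata are surjective: given $y \in \Sigma^\Z$, one builds a preimage coordinate by coordinate, using right-permutivity to solve $G\big(x_{[i-R,i+R]}\big) = y_i$ for $x_{i+R}$ once $x_{[i-R,i+R-1]}$ has been chosen, and passes to a limit using compactness of $\Sigma^\Z$; hence $d \in \SUR$. And by construction $d(x_j)_0 = G\big((x_j)_{[-R,R]}\big) = a_j$, so $d \in V$. Thus $V \cap \SUR \neq \emptyset$, and $\SUR$ is dense. I expect the only delicate point to be the choice of $R$ that makes the words $p_j$ separate the given configurations; once that is arranged, the permutivity of $G$ absorbs all remaining freedom and prevents any interference between the finitely many pointwise constraints.
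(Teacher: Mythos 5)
Your proof is correct and follows essentially the same route as the paper's: choose a radius large enough that the finitely many configurations are separated, realize the finitely many pointwise constraints by a right-permutive local rule, and conclude surjectivity from right-permutivity. The only difference is cosmetic — the paper makes the rule permutive by adding a correction term $x_{n+1} - X(x_{N(n)})$ modulo $|\Sigma|$ to a given automaton $c$, while you specify the per-row bijections (transpositions) directly.
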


\begin{proof}
Let $c \in \CA$ be arbitrary, and let $U = \bigcap_{x \in X} U_x(c)$ be a neighborhood of $c$, where $X \subset \Sigma^\Z$ is finite. Let $n \in \N$ be such that every pair $x \neq y \in X$ differs in some coordinate in $N(n)$. We assume that $\Sigma = \{0, \ldots, k-1\}$, so that we can use addition modulo $k$ on $\Sigma$. For a centered word $w \in \Sigma^{N(n)}$, we define $X(w) \in \Sigma$ as the letter $x_{n+1}$, if $x_{N(n)} = w$ for some $x \in X$, and $0$ otherwise. Note that $X(w)$ is well-defined by the choice of $n$.

We define the automaton $d \in \CA$, which has radius $n + 1$, by
\[ d(x)_0 = c(x)_0 + x_{n+1} - X(x_{N(n)}) \bmod k \]
for all $x \in \Sigma^\Z$. We immediately see that $d(x)_0 = c(x)_0$ for all $x \in X$, so that $d \in U$. Also, $d$ is right-permutive, meaning that $d(v) \neq d(w)$ holds whenever $v,w \in \Sigma^{N(n+1)}$ differ only in their rightmost coordinate $n+1$. It is clear that such automata are always surjective, so the claim is proved. 
\end{proof}

We note that whether a set $F \subseteq \CA$ is dense in another set $G \subseteq \CA$ in the pointwise topology has a natural interpretation even outside our formalism: this is the case if and only if whenever one chooses an automaton $c \in G$ and a finite set of points $X \subseteq \Sigma^\Z$, some automaton $d \in F$ satisfies $d(x)_0 = c(x)_0$ for all $x \in X$.

The pointwise topology has the curious property that, while the underlying set is countable, the space is not first-countable. To our knowledge, examples of such spaces are nontrivial to construct. Of the $143$ examples in \cite{SeSt95}, only $5$ spaces are countable but not first-countable, and we compare them with the pointwise $\CA$ space in Table~\ref{tab:Topo}.

\begin{theorem}
\label{thm:PointwiseProperties}
The space $\CA$ with the pointwise topology has the following properties:
\begin{itemize}
\item It is countable.
\item It is Hausdorff.
\item It is not compact.
\item It is perfect.
\item It is totally disconnected.
\item It is of first category.
\item It is not first-countable, and in fact no point has a countable neighborhood basis. It is thus neither metrizable nor second-countable.
\item It is normal.
\item It is not sequential.
\end{itemize}
\end{theorem}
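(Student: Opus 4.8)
The plan is to handle the nine properties in groups. Countability is immediate since $\CA$ is a countable union (over radii $r$) of the finite sets of local rules $\Sigma^{\Sigma^{N(r)}}$. For Hausdorffness, given $c \neq d$ there is a point $x$ (even a periodic one) with $c(x)_0 \neq d(x)_0$, so $U_x(c)$ and $U_x(d)$ separate them. Non-compactness and perfectness I would get together: fix any $x$, and note the subbasic sets $U_x(a)$, $a \in \Sigma$, form a finite partition of $\CA$ into nonempty (since $|\Sigma|\ge 2$) open sets, witnessing non-compactness via a slightly larger construction — more directly, take countably many points $x^{(n)}$ in pairwise-distinct positions and observe the $U_{x^{(n)}}(\cdot)$ generate an open cover with no finite subcover by choosing for each $n$ an automaton agreeing with a prescribed value only at $x^{(n)}$. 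For perfectness, given $c$ and a basic neighborhood $U = \bigcap_{x \in X} U_x(c)$ with $X$ finite, one modifies $c$ on a periodic point outside $X$ (as in the proof of Proposition~\ref{prop:SURDense}, enlarging the radius) to get $d \neq c$ in $U$; hence no point is isolated. Total disconnectedness follows because each $U_x(a)$ is clopen (its complement is $\bigcup_{b \neq a} U_x(b)$), so $\CA$ has a clopen subbase, giving zero-dimensionality, hence total disconnectedness; the same observation will feed into normality.

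For ``first category'', I would exhibit $\CA$ as a countable union of nowhere dense sets: for each radius $r$, the set $\CA_r$ of automata with radius $\le r$ is finite, and in a perfect Hausdorff space a finite set is nowhere dense, so $\CA = \bigcup_r \CA_r$ works. For normality the cleanest route is to show the space is regular and Lindel\"of, or simply note that a countable regular space is normal (a standard fact, e.g.\ every countable regular space is paracompact, hence normal); regularity follows from zero-dimensionality established above. Non-sequentiality will follow once first-countability fails, \emph{together with} an explicit sequentially closed set that is not closed; the natural candidate is $\SUR$, which by Proposition~\ref{prop:SURDense} is dense (so not closed, as $\CA \neq \SUR$) but which one checks to be sequentially closed: if $c_i \to c$ pointwise with each $c_i$ surjective, then for each periodic point $y$ and each $x$ with $c(x) = y$ eventually... — more carefully, one uses the balance property on finite words, noting that for a fixed word $w$ the preimage count $|c^{-1}(w)|$ is determined by finitely many pointwise values, which stabilize along the sequence, forcing $|c^{-1}(w)| = |\Sigma|^{2r(c)}$; here one must be slightly careful that $r(c_i)$ may grow, but the balance count is radius-independent in the right normalization, so $c$ inherits balance and is surjective.

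The heart of the theorem, and the main obstacle, is showing that \emph{no point has a countable neighborhood basis}. Fix $c \in \CA$ and suppose $(V_n)_{n \in \N}$ were a countable neighborhood basis; each $V_n$ contains a basic neighborhood $\bigcap_{x \in X_n} U_x(c)$ with $X_n \subseteq \Sigma^\Z$ finite, so $X = \bigcup_n X_n$ is countable. The strategy is to produce a point $z \in \Sigma^\Z \setminus X$ and an automaton $d$ with $d(z)_0 \neq c(z)_0$ but $d(x)_0 = c(x)_0$ for all $x \in X$ — then $U_z(c)$ is a neighborhood of $c$ containing no $V_n$, a contradiction. The construction of $d$ mimics Proposition~\ref{prop:SURDense}: one wants a single ``controlled coordinate'' far enough out that it separates $z$ from all of $X$, but since $X$ is infinite no finite radius suffices against all of $X$ simultaneously. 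The fix is a diagonalization: enumerate $X = \{x^{(0)}, x^{(1)}, \ldots\}$, build $z$ so that $z$ disagrees with $x^{(k)}$ somewhere in $N(k)$ — choosing $z$ coordinate by coordinate (possible since $|\Sigma| \ge 2$, picking $z_k \neq x^{(k)}_k$ whenever $z$ has not yet been forced to differ from $x^{(k)}$, which is always arrangeable by a back-and-forth bookkeeping) — and then define $d$ by a local rule that, reading a window, checks whether the window matches the $N(k)$-pattern of some $x^{(k)}$ and, if not (in particular around $z$), flips the output using a permutation $\tau$; the point is that $d$ is well-defined with \emph{some} finite radius only if the matching test is finite, so instead one realizes $d$ as a limit-free single CA by noting it suffices to flip $c$'s output on a \emph{clopen} set disjoint from $X$ but containing $z$ — and such a clopen set exists precisely because $z \notin \overline{X}$ fails in general, so the real argument must instead pick $z$ to be a periodic point avoiding the countable set $X$ (possible since periodic points are dense and $X$ is countable, though one needs $X$ to miss a whole periodic orbit, achievable by a counting argument over periods) and then $d$ is the automaton flipping $c$'s central output exactly on that finite periodic orbit's cylinder, which has a genuine finite radius. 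Making this last step fully rigorous — choosing the periodic point and verifying $d$ is a legitimate CA with $d(x)_0 = c(x)_0$ for all $x \in X$ — is the delicate part, and is where I would spend the most care.
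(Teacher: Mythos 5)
There is a genuine gap, and it sits exactly where you locate ``the heart of the theorem''. Your plan for non-first-countability is to produce a \emph{single} automaton $d$ and a point $z \notin X$ with $d(z)_0 \neq c(z)_0$ but $d(x)_0 = c(x)_0$ for \emph{all} $x$ in the countable set $X$. This cannot work in general: the hypothetical countable basis may force $X$ to be dense in the Cantor topology, and then no such $d$ exists at all, since $x \mapsto d(x)_0$ and $x \mapsto c(x)_0$ are continuous functions on $\Sigma^\Z$ agreeing on a dense set, hence everywhere, i.e.\ $d = c$. Your proposed repairs do not escape this: the diagonal choice of $z$ only guarantees $z \neq x^{(k)}$ for each $k$, not $z \notin \overline{X}$ (the $x^{(k)}$ may still agree with $z$ on longer and longer central windows), and choosing $z$ periodic does not help either, because ``flipping $c$'s output on the orbit's cylinder'' flips it on a whole cylinder, which meets $X$ whenever $X$ is dense, destroying the requirement $d(x)_0 = c(x)_0$ on $X$; flipping only on the finite orbit itself is not realizable by any cellular automaton. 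The paper's argument avoids the single-automaton trap entirely: it takes any $y \notin X$ and a \emph{sequence} $c_i$ of radius-$i$ automata that differ from $c$ only on the shrinking cylinders $[y_{[-i,i]}]_{-i}$. Each $c_i$ may disagree with $c$ at some points of $X$, but for every fixed $x \neq y$ one eventually has $x \notin [y_{[-i,i]}]_{-i}$, so $c_i \in U_x(c)$ for large $i$; hence $c_i$ is eventually inside every finite intersection $\bigcap_{x \in X_n} U_x(c) \subseteq V_n$, while $c_i \notin U_y(c)$ for all $i$, contradicting the assumption that $(V_n)$ is a neighborhood basis (some $V_n$ would lie inside $U_y(c)$). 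The separation is achieved in the limit by a sequence, not by one automaton, and this is not a cosmetic difference but the essential idea you are missing.

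A secondary gap: for non-sequentiality you need $\SUR$ to be sequentially closed in the pointwise topology, and your sketch (``the preimage count is determined by finitely many pointwise values, which stabilize'') does not go through, precisely because $r(c_i)$ may grow: agreement of $c_i$ with $c$ at finitely many chosen points controls neither the local rule of $c_i$ nor its preimage counts, so the balance property does not transfer this way. The paper instead gets sequential closedness for free from two results proved independently of this theorem: pointwise convergence implies $\delta^\mu$-convergence (Proposition~\ref{prop:Finer}), and $\SUR$ is closed in the uniform Bernoulli metric (Theorem~\ref{thm:SURClosed}); you would need either this route or a genuinely quantitative counting argument. (Your remaining items -- countability, Hausdorffness, perfectness, total disconnectedness, first category, and normality via ``countable regular implies normal'' -- are fine, the last being a legitimate alternative to the paper's explicit construction; your non-compactness sketch is too vague as stated, but a clean fix is that a countable, perfect, compact Hausdorff space would contradict the Baire property.)
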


\begin{proof}
Countability and the Hausdorff property are trivial. The lack of compactness follows from $(U_x(a))_{x \in S^\Z, a \in \Sigma}$ not having a finite subcover. The space is easily seen to be perfect, since the finite intersections of $U_x(a)$ form a base, and no such set is a singleton. The space is totally disconnected because every $U_x(a)$ is clopen (its complement is $\bigcup_{b \in \Sigma - \{a\}} U_x(b)$), and it is of first category since every singleton set is closed and the space is countable.

As for lack of first-countability, let $c \in \CA$, and assume $(V_i)_{i \in \N}$ is a neighborhood basis for $c$. Since $(U_x(c))_{x \in \Sigma^\Z}$ forms a neighborhood subbasis for $c$ as well, there must exist finite sets $X_i \subset \Sigma^\Z$ such that $\bigcap_{x \in X_i} U_x(c) \subset V_i$ for all $i$. If we let $X = \bigcup_{i \in \N} X_i$, then clearly $(U_x(c))_{x \in X}$ is already a countable subbase for $c$. Let $y \notin X$, and for each $i$, let $c_i$ be the cellular automaton with $r(c_i) = i$ which behaves like $c$, except for mapping $y_{[-i, i]}$ to $\tau(c(y)_i)$. Clearly, $c_i \not\rightarrow c$ since $c_i(y)_0 \neq c(y)_0$ for all $i$. However, when $x \neq y$, we have $c_i \in U_x(c)$ for large enough $i$, and thus $c_i \rightarrow c$, a contradiction.

Finally, we prove the normality. Let $c_i, d_i \in \CA$ for all $i \in \N$ be such that the sets $X = \{c_i \;|\; i \in \N\}$ and $Y = \{d_i \;|\; i \in \N\}$ are closed and disjoint. Since the space is countable, this covers all the closed sets. Inductively for all $i \in \N$, we choose clopen neighborhoods $U_i \ni c_1,\ldots,c_i$ and $V_i \ni d_1,\ldots,d_i$ such that $U_i \cap Y = V_i \cap X = U_i \cap V_i = \emptyset$ as follows. First, $U_0 = V_0 = \emptyset$. Let $U_{i-1}$ and $V_{i-1}$ be defined. Let $U$ be a clopen neighborhood of $c_i$ such that $U \cap Y = \emptyset$, and define $U_i = U_{i-1} \cup (U - V_{i-1})$. Similarly, $V_i = V_{i-1} \cup (V - U_i)$ for a clopen neighborhood $V$ of $d_i$ with $V \cap X = \emptyset$. Now, $\bigcup_i U_i$ and $\bigcup_i V_i$ form disjoint open neighborhoods of $X$ and $Y$, respectively, and this proves normality.

The lack of sequentiality is seen as follows: The set $\SUR$ is dense by Proposition~\ref{prop:SURDense}, and thus not closed, since now $\SUR \subsetneq \CA = \overline{\SUR}$. However, we will prove in Theorem~\ref{thm:SURClosed} that $\SUR$ is closed in another topology we define in Section~\ref{sec:Combinatorics}, and Lemma~\ref{prop:Finer} shows that such sets are sequentially closed in the pointwise topology. These results do not depend on this theorem. 
\end{proof}

\begin{remark}
In fact, the above proof for normality works as such in any countable space with a clopen base. Without countability, we only know that such a space is completely regular.
\end{remark}

We now note a further connection with the Cantor space, namely that the pointwise topology is zero-dimensional and perfect. It is known that a compact, perfect, zero-dimensional and metrizable space is homeomorphic to the Cantor space. We do not know whether a similar characterization exists for the pointwise topology, and whether the `duality' can be formalized.

\begin{problem}
Characterize the pointwise topology on $\CA$ using some of its topological properties.
\end{problem}

\begin{table}
\centering
\begin{tabular}{l|llllll|}
\cline{2-7}
																	& \si{normal Hausdorff}	& \si{compact}& \si{sequential}	& \si{tot. disconnected}	& \si{first category} &	\si{perfect}	\\
\hline
\multicolumn{1}{|c|}
{$\CA$ with pointwise topology}		& yes										& no					& no							& yes											&	yes									& yes						\\
\arrayrulecolor{black!30}
\hline
\arrayrulecolor{black}
\multicolumn{1}{|c|}
{26. Arens-Fort space}						& \same{yes}						& \same{no}		& \same{no}				& \same{yes}							& \diff{no}						& \diff{no}			\\
\multicolumn{1}{|c|}
{35. $\Q^*$}											& \diff{no}							& \diff{yes}	& \diff{yes}			& \diff{no}								&	\same{yes}					& \same{yes}		\\
\multicolumn{1}{|c|}
{98. Appert space}								& \same{yes}						& \same{no}		& \same{no}				& \same{yes}							&	\diff{no}						& \diff{no}			\\
\multicolumn{1}{|c|}
{99. maximal compact topology}		& \diff{no}							& \diff{yes}	& \diff{yes}			& \same{yes}							&	\diff{no}						& \diff{no}			\\
\multicolumn{1}{|c|}
{114. single ultrafilter topology}& \same{yes}						& \same{no}		& \same{no}				& \same{yes}							&	\diff{no}						& \diff{no}			\\
\hline
\end{tabular}
\caption{Properties of the countable but not first-countable counterexamples of \cite{SeSt95} contrasted with our space. The space $\Q^*$ is the one-point compactification of $\Q$. In the table, the spaces which are not normal Hausdorff are neither normal nor Hausdorff. The space $\Q^*$ is in fact connected. The spaces which are not perfect do not even contain induced perfect subspaces.}
\label{tab:Topo}
\end{table}

Finally, we note that one might also define a topology on $\CA$ in one of the following ways: $c_i \rightarrow c$ if for all $x \in \Sigma^\Z$, we have $c_i(x) = c(x)$ for all large enough $i$, or by defining that $c_i \rightarrow c$ if for all $w \in \Sigma^\leftrightarrow$, we have $c_i(x)_0 = c(x)_0$ for all $x \in [w]_0$ for large enough $i$. However, both of these attempts result in the discrete topology, as is easily verified.

\section{Topologizing the CA by measuring preimages: the combinatorial approach}
\label{sec:Combinatorics}

\begin{definition} \label{delta}
Let $c,d \in \CA$, and $r$ a common radius for $c$ and $d$. Then the \emph{difference set} of $c$ and $d$ for the radius $r$ is
\[ D^c_d = \{ w \in \Sigma^{N(r)} \;|\; c(w) \not\sim d(w) \}, \]
where the CA are taken as word functions $c : \Sigma^{N(r)} \to \Sigma^{N(r-r(c))}$ and $d : \Sigma^{N(r)} \to \Sigma^{N(r-r(d))}$. The \emph{distance} between $c$ and $d$ is
\[ \delta(c,d) = \frac{\abs{D^c_d}}{\abs{\Sigma}^{2r+1}}. \]
We call the space $\CA$ with this metric the \emph{uniform Bernoulli space}.
\end{definition}

The name `uniform Bernoulli space' comes from the fact that we later define pseudometrics on $\CA$ using shift-invariant measures, and the pseudometric $\delta$ is given by the uniform Bernoulli measure. The value $\delta(c,d)$ above does not depend on the exact value of $r$, and it is left implicit in the notation $D^c_d$. For $c \circ d$, we always use the radius $r(c) + r(d)$.

\begin{example}
\label{ex:NotPrecompact}
The space $\CA$ is not totally bounded (and thus not compact): Let $\{c_i \in \CA \;|\; i=1, \ldots, n\}$ be a finite set of automata, and let $r = \max \{r(c_i) \;|\; i=1, \ldots, n\}$. Consider the $r+1$-shift automaton $\sigma^{r+1}$. Given a word $w \in \Sigma^{N(r+1)}$, we have $\sigma^{r+1}(w) \sim c_i(w)$ iff $w_{r+1} = c_i(w_0)$. Thus
\[ \delta(\sigma^{r+1},c_i) = 1 - \abs{\Sigma}^{-1}, \]
and $\CA$ cannot be covered by a finite number of open $(1 - \abs{\Sigma}^{-1})$-balls.
\end{example}

\begin{example}
The function $(\circ) : \CA^2 \to \CA$ is not continuous. In fact, even the function $(d \mapsto d \circ c)$ is not continuous for all $c \in \CA$: Let $\Sigma = \{0, 1\}$ and let $c_q$ for $q \in \{0,1\}$ be the all-$q$ CA defined by $c_q(x)_i = q$ for all $x \in \Sigma^\Z, i \in \Z$. We let
\[
d_i(x)_j = \left\{ \begin{array}{ll}
  1, & \mbox{if } x_{[j-i, j+i]} = 0^{2i+1} \\
  0, & \mbox{otherwise.}
\end{array} \right.
\]
Clearly $d_i \rightarrow c_0$, but since $d_i \circ c_0 = c_1$ for all $i$, $d_i \circ c_0$ does not approach $c_0 \circ c_0 = c_0$.
\end{example}

Here, we used a rather pathological CA as the rightmost argument, and in fact, with suitable restrictions, we do obtain continuity.

\begin{theorem}
The restriction of $(\circ)$ to $\CA \times \SUR$ is continuous.
\end{theorem}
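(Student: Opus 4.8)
The plan is to show that composition is continuous at every pair $(c, d)$ with $d \in \SUR$, by estimating $\delta(c' \circ d', c \circ d)$ in terms of $\delta(c', c)$ and $\delta(d', d)$ when $d'$ is close to $d$. Fix a large common radius $r$ for all the automata involved (we may enlarge it freely, since $\delta$ is independent of $r$). The key observation is the triangle inequality split
\[
\delta(c' \circ d', c \circ d) \leq \delta(c' \circ d', c \circ d') + \delta(c \circ d', c \circ d),
\]
so it suffices to bound each term. The second term, where only the inner automaton is perturbed, is easy: if $d'$ and $d$ agree (up to $\sim$) on the centered word read by a given coordinate, then $c \circ d'$ and $c \circ d$ agree there too; more care is needed because $c$ has a nonzero radius, so a single disagreement of $d'$ versus $d$ can spoil up to $2r(c)+1$ output coordinates. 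A standard counting argument then gives $\delta(c \circ d', c \circ d) \leq (2r(c)+1)\,\delta(d', d)$, which is small when $d'$ is close to $d$.

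The first term, $\delta(c' \circ d', c \circ d')$, is where surjectivity of $d$ — and hence near-surjectivity of $d'$ — enters, and this is the main obstacle. Here the outer automaton is perturbed, but it is fed through $d'$ rather than directly, so we need that $d'$ does not concentrate its image on the difference set $D^{c'}_c$. The idea is: a word $w \in \Sigma^{N(r)}$ contributes to $D^{c' \circ d'}_{c \circ d'}$ only if the word $d'(w) \in \Sigma^{N(r - r(d'))}$, suitably padded, lies in (a cylinder over) $D^{c'}_c$. So I want a bound of the form
\[
\abs{D^{c' \circ d'}_{c \circ d'}} \leq K \cdot \abs{\Sigma}^{2r(d')} \cdot \abs{D^{c'}_c}
\]
for a constant $K$ depending only on the radii, which would give $\delta(c' \circ d', c \circ d') \leq K' \delta(c', c)$. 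When $d'$ is exactly surjective this is immediate from the balance property: every word of a given length has exactly $\abs{\Sigma}^{2r(d')}$ preimages, so preimages of the `bad' words number at most $\abs{\Sigma}^{2r(d')}$ times the count of bad words (with an extra factor for the padding/radius bookkeeping). The delicate point is that $d'$ need only be \emph{close} to the surjective $d$, not surjective itself — so I would instead argue that $d'$ agrees with $d$ (up to $\sim$) on all but a $\delta(d',d)$-fraction of inputs, use the balance property of $d$ to control preimages of bad words under $d$, and absorb the exceptional inputs into an additional error term proportional to $\delta(d', d)$. Concretely, this means splitting the set of $w$ contributing to $D^{c' \circ d'}_{c \circ d'}$ into those on which $d'$ behaves like $d$ (counted via balance of $d$) and those on which it does not (bounded by $\delta(d',d)$ times a radius-dependent constant). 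Combining, one obtains
\[
\delta(c' \circ d', c \circ d) \leq K' \delta(c', c) + K'' \delta(d', d)
\]
for constants depending only on $r(c), r(d)$, which is exactly continuity at $(c,d)$.

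The only subtlety I anticipate beyond the surjectivity argument is the radius bookkeeping: one must be careful that enlarging $r$ does not change any of the $\delta$ values, that the `$\sim$' relation (agreement at the center only) interacts correctly with composition, and that the padding of $d'(w)$ from length $2(r - r(d'))+1$ back up to an argument for $c'$ is handled by the fact that $c'$ only reads $N(r(c'))$ around the center. None of this is hard, but it is where an incautious estimate could lose a factor. Once those constants are pinned down, continuity follows, and since the bound is in fact linear in $\delta(c',c)$ and $\delta(d',d)$, the restriction of $(\circ)$ to $\CA \times \SUR$ is not merely continuous but locally Lipschitz in each variable.
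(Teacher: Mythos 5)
Your overall strategy (triangle-inequality split through $c \circ d'$, plus a counting/balance argument) is essentially the paper's, but your handling of what you call the ``delicate point'' rests on a misreading of the statement, and the substitute argument you sketch does not prove the bound you claim. Since the map being proved continuous is the \emph{restriction} of $(\circ)$ to $\CA \times \SUR$, the nearby pairs $(c',d')$ you must consider lie in $\CA \times \SUR$, so $d'$ \emph{is} surjective; there is no need to work only with the balance property of $d$. The paper indeed applies the balance property to the approximating inner automaton itself (its $d_i$), splitting the difference set $D^{c_i \circ d_i}_{c \circ d}$ directly into the part where $d_i$ and $d$ disagree on $N(c)$ and the rest, rather than using your triangle inequality; with $d'$'s surjectivity your split works just as well and gives $\delta(c' \circ d', c \circ d) \leq \delta(c',c) + (2r(c)+1)\,\delta(d',d)$.

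The genuine gap is in your workaround for a non-surjective $d'$: to transfer the condition ``$d'(w)$ lies in $D^{c'}_c$'' to a condition on $d(w)$, you need $d'$ and $d$ to agree not just at the centre but on the whole window $N(\rho)$ with $\rho = \max(r(c), r(c'))$, because $D^{c'}_c$ consists of centred words of that radius. The resulting error term is of order $(2\rho+1)\,\delta(d',d)$, and $r(c')$ is unbounded over any $\delta$-ball around $c$ (automata arbitrarily close to $c$ can have arbitrarily large minimal radius). Hence your claimed inequality $\delta(c' \circ d', c \circ d) \leq K'\delta(c',c) + K''\delta(d',d)$ with constants depending only on $r(c), r(d)$ does not follow from the argument as described, and continuity is not established along that route: for fixed $\eta$ one can take $\delta(c',c) < \eta$ with $r(c')$ huge and $\delta(d',d)$ comparable to $\eta$, making the error term large. (This failure is not an artifact: without surjectivity of the inner automaton, preimage counts of single words can be exponentially larger than $\abs{\Sigma}^{2r(d')}$, and the paper's counterexample with the all-$q$ automata shows $(\circ)$ genuinely fails to be continuous on $\CA \times \CA$.) The repair is simply to use the balance property of $d'$ itself, which the hypothesis $d' \in \SUR$ licenses; with that change your decomposition is correct and matches the paper's estimate up to constants, including the local Lipschitz statement at the end.
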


\begin{proof}
Since $\CA \times \SUR$ is a metric space, it is enough to check that limits of sequences commute with $(\circ)$. So let $c_i, c \in \CA$ and $d_i, d \in \SUR$ with $(c_i, d_i) \rightarrow (c, d)$. We need to show $c_i \circ d_i \rightarrow c \circ d$. Let $i$ be large enough that $\delta(c_i,c) < \epsilon$ and $\delta(d_i,d) < \epsilon$, and assume that $r(c_i) \geq r(c)$ and $r(d_i) \geq r(d)$. We need an upper bound on the size of $D^{c_i \circ d_i}_{c \circ d}$.

First, let us give an upper bound for the size of the set $A$ of all words $w \in \Sigma^{N(r(c_i) + r(d_i))}$ such that $d_i(w)_j \neq d(w)_j$ for some $j \in N(c)$. If $w \in A$, then at least one of $w_{N(d_i)+j}$ for $j \in N(c)$ must be in $D^{d_i}_d$. Thus
\[ |A| \leq |N(c)| \cdot \abs{D^{d_i}_d} \cdot |\Sigma|^{2r(c_i)}. \]

Then, let us give an upper bound for the size of $B = D^{c_i \circ d_i}_{c \circ d} - A$. Let $w \in B$, so that $d_i(w)_{N(c)} = d(w)_{N(c)}$. Since $w \in D^{c_i \circ d_i}_{c \circ d}$, we have $c_i(d_i(w)) \not\sim c(d(w)) \sim c(d_i(w))$, which implies $c_i(d_i(w)) \not\sim c(d_i(w))$, and thus $d_i(w) \in D^{c_i}_c$. Now we have that $d_i(B) \subset D^{c_i}_c$, and since $d_i \in \SUR$, the balance property implies
\[ |B| \leq \abs{D^{c_i}_c} \cdot |\Sigma|^{2r(d_i)}. \]

Of course, $\abs{D^{c_i \circ d_i}_{c \circ d}} = |A| + |B|$, which implies $\delta(c_i \circ d_i, c \circ d) < (2r(c) + 2)\epsilon$ by a direct calculation. 
\end{proof}

The following lemma is proved similarly as the previous one, and shows that composition on the left is always continuous.

\begin{lemma} \label{rcomp}
If $e \in \CA$, then the function $c \mapsto e \circ c$ is (Lipschitz-)continuous in $\CA$. More specifically,
\[ \delta(e \circ c, e \circ d) \leq (2r(e)+1) \cdot \delta(c, d) \]
for all $c, d \in \CA$.
\end{lemma}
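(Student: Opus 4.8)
The plan is to bound the difference set $D^{e\circ c}_{e\circ d}$ by a union bound, exactly mirroring the estimate for the set $A$ in the proof of the previous theorem (with $e$ in the role of the fixed outer automaton). Fix a common radius $r$ for $c$ and $d$; recall from the remark after Definition~\ref{delta} that $\delta$ does not depend on this choice, so working with a single $r$ is harmless. Then $e\circ c$ and $e\circ d$ both have radius $r(e)+r$, and I view them as word functions on $\Sigma^{N(r(e)+r)}$: for $w\in\Sigma^{N(r(e)+r)}$ and $j\in N(r(e))$ the subword $w_{[j-r,j+r]}$ is well defined, so $c(w)\in\Sigma^{N(r(e))}$ and $e(c(w))_0$ is well defined, and likewise for $d$.

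The key observation is localization: if $w\in\Sigma^{N(r(e)+r)}$ satisfies $c(w)_j=d(w)_j$ for every $j\in N(r(e))$, then $e(c(w))_0=e(d(w))_0$, since $e$ only reads coordinates in $N(r(e))$. Hence $w\in D^{e\circ c}_{e\circ d}$ forces $c(w)_j\neq d(w)_j$ for at least one $j\in N(r(e))$, i.e.\ the recentered subword $w_{[j-r,j+r]}$ lies in $D^c_d\subseteq\Sigma^{N(r)}$. Consequently
\[ D^{e\circ c}_{e\circ d} \subseteq \bigcup_{j\in N(r(e))} \bigl\{\, w\in\Sigma^{N(r(e)+r)} \;\big|\; w_{[j-r,j+r]}\text{ (recentered)}\in D^c_d \,\bigr\}. \]
Each of the $\abs{N(r(e))}=2r(e)+1$ sets in this union pins down exactly the $2r+1$ coordinates in positions $j-r,\dots,j+r$ and leaves the remaining $2r(e)$ coordinates free, so it has cardinality $\abs{D^c_d}\cdot\abs{\Sigma}^{2r(e)}$. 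Therefore $\abs{D^{e\circ c}_{e\circ d}}\leq(2r(e)+1)\cdot\abs{D^c_d}\cdot\abs{\Sigma}^{2r(e)}$, and dividing by $\abs{\Sigma}^{2(r(e)+r)+1}=\abs{\Sigma}^{2r(e)}\cdot\abs{\Sigma}^{2r+1}$ yields
\[ \delta(e\circ c,e\circ d) \;\leq\; (2r(e)+1)\cdot\frac{\abs{D^c_d}}{\abs{\Sigma}^{2r+1}} \;=\; (2r(e)+1)\cdot\delta(c,d), \]
which is the claimed inequality; Lipschitz-continuity (and hence continuity) of $c\mapsto e\circ c$ follows immediately.

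This argument is essentially bookkeeping and I do not expect a genuine obstacle. The only points needing care are the index arithmetic with radii: checking that the word functions really are well defined on $\Sigma^{N(r(e)+r)}$, that a discrepancy of the composed maps at the central coordinate genuinely localizes to one of exactly $2r(e)+1$ windows of width $2r+1$, and that each such event constrains precisely $2r+1$ coordinates of $w$ so that the counting factor $\abs{\Sigma}^{2r(e)}$ cancels correctly against the normalization.
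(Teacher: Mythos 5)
Your proposal is correct and follows essentially the same route as the paper's proof: localize a discrepancy of $e\circ c$ and $e\circ d$ at the center to a difference $c(w)_j\neq d(w)_j$ for some $j\in N(e)$, then bound $\abs{D^{e\circ c}_{e\circ d}}$ by the union bound $\abs{N(e)}\cdot\abs{D^c_d}\cdot\abs{\Sigma}^{2r(e)}$ and divide by the normalization. The only difference is that you spell out the ``direct calculation'' and the coordinate bookkeeping that the paper leaves implicit.
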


\begin{proof}
Let $c, d \in \CA$, and let $r=\max(r(c),r(d))$. Now if $e(c(w)) \not\sim e(d(w))$ for some word $w \in \Sigma^{N(r+r(e))}$, then necessarily $c(w)_{N(e)} \neq d(w)_{N(e)}$. Thus we have
\begin{eqnarray*}
\abs{D^{e \circ c}_{e \circ d}} & \leq & \abs{ \{ w \in \Sigma^{N(r+r(e))} \;|\; c(w)_{N(e)} \neq d(w)_{N(e)} \} } \\
& \leq & \abs{N(e)} \cdot \abs{D^c_d} \cdot \abs{\Sigma}^{2r(e)}.
\end{eqnarray*}
The claim again follows by a direct calculation. 
\end{proof}

Using the above lemma, we prove that inversion is also continuous.

\begin{theorem}
The function $(\cdot)^{-1} : \REV \to \REV$ is continuous.
\end{theorem}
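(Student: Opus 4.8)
The plan is to leverage the two previously established facts: that left composition $c \mapsto e \circ c$ is Lipschitz (Lemma~\ref{rcomp}), and that $\REV \subset \SUR$ together with the balance property controls preimage counts. The key identity is that for reversible $c$ and $d$ we have $c^{-1} - d^{-1}$ related to $d - c$ by composing on both sides: $c^{-1} \circ d$ measures how far $d$ is from $c$, and $c^{-1} = c^{-1} \circ d \circ d^{-1}$. So I first want to control $\delta(c^{-1} \circ d, \ID)$ in terms of $\delta(c,d)$, and then translate this back to $\delta(c^{-1}, d^{-1})$.

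First, I would show that $\delta(c^{-1} \circ d, \ID) \leq \delta(c, d)$ (or at least a constant multiple). The point is that $c^{-1} \circ d$ disagrees with the identity at a word $w$ exactly when $d(w) \not\sim c(w)$ — more carefully, $(c^{-1} \circ d)(w) \not\sim w$ forces $d(w) \not\sim c(c^{-1}(d(w))_{\text{center}}\ldots)$, i.e.\ $w$ lies (up to the radius bookkeeping) in the difference set $D^d_c$. Since $c$ and $c^{-1}$ are both surjective, the balance property lets me pass between word-counts at different radii without loss, exactly as in the proof of the previous theorem: the fibers of $c^{-1}$ (as a word function) all have the same size, so the proportion of bad words is preserved. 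This gives $\delta(c^{-1} \circ d, \ID) \le C_1 \cdot \delta(c,d)$ for a constant depending only on the radii involved.

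Second, I would write $c^{-1} = c^{-1} \circ d \circ d^{-1}$ and estimate
\[ \delta(c^{-1}, d^{-1}) = \delta\bigl((c^{-1} \circ d) \circ d^{-1}, \ID \circ d^{-1}\bigr). \]
Now $d^{-1} \in \REV \subset \SUR$, so by the balance property the composition on the right by $d^{-1}$ does not increase the density of the difference set: $\delta(e \circ d^{-1}, f \circ d^{-1}) = \delta(e, f)$ when $d^{-1}$ is surjective, by an argument identical to the bound on $|B|$ in the proof of continuity of $(\circ)$ on $\CA \times \SUR$. Hence $\delta(c^{-1}, d^{-1}) \le \delta(c^{-1} \circ d, \ID) \le C_1 \cdot \delta(c, d)$, which yields continuity (indeed Lipschitz-continuity on sets of bounded radius).

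The main obstacle I anticipate is the radius bookkeeping: the constant $C_1$ above depends on the radii of $c$ and $c^{-1}$, and the radius of $c^{-1}$ is not controlled by the radius of $c$, so the map is not globally Lipschitz — continuity must be argued pointwise, fixing $c \in \REV$ and letting $d$ vary in a neighborhood (so that $r(d)$ and $r(d^{-1})$ can be assumed bounded, or handled by noting $d \to c$ in $\delta$ does not a priori bound $r(d^{-1})$). One clean way around this is: to check continuity at $c$, it suffices to check sequential continuity (the space is metric), so take $d_i \to c$; I then need $d_i^{-1} \to c^{-1}$. If the radii $r(d_i^{-1})$ happen to be bounded, the estimates above close immediately. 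If they are unbounded, I would need a separate argument that a small $\delta$-distance from a reversible CA forces the inverse to have bounded radius on the relevant window — this is the subtle point and the place where the proof genuinely uses reversibility rather than mere surjectivity, and is where I would expect to spend most of the effort.
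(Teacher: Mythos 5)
Your core estimate is correct and is essentially the paper's own argument, reflected: the paper pulls $D^{c^{-1}}_{d^{-1}}$ back under the surjective $d$ and gets $\delta(c^{-1},d^{-1}) \leq \delta(c^{-1}\circ c,\, c^{-1}\circ d)$ from the balance property, then applies Lemma~\ref{rcomp} with $e=c^{-1}$; you instead compose on the right with the surjective $d^{-1}$, and the same balance-property count gives $\delta(c^{-1},d^{-1}) = \delta\bigl((c^{-1}\circ d)\circ d^{-1},\ \ID\circ d^{-1}\bigr) = \delta(c^{-1}\circ d,\ID)$, after which Lemma~\ref{rcomp} yields $\delta(c^{-1}\circ d,\, c^{-1}\circ c) \leq (2r(c^{-1})+1)\,\delta(c,d)$. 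One local correction to your first step: $(c^{-1}\circ d)(w)\not\sim w$ does \emph{not} force the centers $d(w)_0$ and $c(w)_0$ to differ; it only forces $d(w)$ and $c(w)$ to differ somewhere in the window $N(c^{-1})$, which is exactly where the factor $2r(c^{-1})+1$ of Lemma~\ref{rcomp} comes from. Your hedge ``or at least a constant multiple'' is the version that is true and is all you need; also, no surjectivity of $c^{-1}$ is needed in that step, only in the right-composition step.

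The obstacle you anticipate at the end, however, is not real, and the remedy you sketch for it would be false. The constant in your chain is $2r(c^{-1})+1$, which depends only on the fixed automaton $c$; the radii $r(d)$ and $r(d^{-1})$ enter only in choosing a common radius at which the word functions are read off, and $\delta$ is independent of that choice (Definition~\ref{delta}). Hence for $d_i\to c$ in $\REV$ you get $\delta(c^{-1},d_i^{-1}) \leq (2r(c^{-1})+1)\,\delta(c,d_i)\to 0$ outright, with no hypothesis on $r(d_i^{-1})$, and your proof is already complete after the second step. This is fortunate, because the ``separate argument'' you propose for the unbounded case --- that a small $\delta$-distance from a reversible CA bounds the radius of the inverse --- cannot be proved: the involutions that flip the central cell of $110^ia0^i11$ converge to $\ID$ while their radii (and hence the radii of their inverses) tend to infinity, and Example~\ref{ex:REVNotClosed} even exhibits reversible automata converging to a non-reversible limit, so no such bound can follow from closeness in $\delta$. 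Reversibility is used only through $c^{-1}\circ c=\ID$ and through $d^{-1}\in\REV\subset\SUR$ (equivalently, in the paper, $d\in\SUR$), not through any control of $r(d^{-1})$.
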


\begin{proof}
Let $c \in \REV$. We prove that $(\cdot)^{-1}$ is continuous at $c$. For that, let $d \in \REV$ such that $r=r(d)>r(c)$ and $r'=r(d^{-1})>r(c^{-1})$.

Let $w \in \Sigma^{N(r+r')}$ be such that $d(w) \in D^{c^{-1}}_{d^{-1}}$. Then we have $c^{-1}(d(w)) \not\sim d^{-1}(d(w)) \sim c^{-1}(c(w))$, which implies $w \in D^{c^{-1} \circ c}_{c^{-1} \circ d}$. This means that the $d$-preimage of $D^{c^{-1}}_{d^{-1}}$ is included in $D^{c^{-1} \circ c}_{c^{-1} \circ d}$.

Since $d \in \SUR$, we again have $\abs{D^{c^{-1}}_{d^{-1}}} \cdot \abs{\Sigma}^{2r+1} \leq \abs{D^{c^{-1} \circ c}_{c^{-1} \circ d}}$ by the balance property. An application of Lemma~\ref{rcomp} now yields
\[ \delta(c^{-1},d^{-1}) \leq \delta(c^{-1} \circ c, c^{-1} \circ d) \leq (2r(c^{-1})+1) \cdot \delta(c,d), \]
which proves the claim. 
\end{proof}

The space has no isolated points, and in fact we obtain a large amount of approximation results by using the continuity of $(\circ)$.

\begin{proposition}
All of $\CA$, $\SUR$ and $\REV$ are perfect as topological spaces.
\end{proposition}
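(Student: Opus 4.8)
The plan is to show that none of the three spaces $\CA$, $\SUR$, $\REV$ has an isolated point, i.e.\ that every automaton in each of these sets can be approximated arbitrarily well (in the metric $\delta$) by a \emph{distinct} automaton lying in the same set. Since the topologies on $\SUR$ and $\REV$ are the subspace topologies inherited from $\CA$, it suffices in each case to produce, for a given $c$ and any $\epsilon > 0$, some $d \neq c$ with $\delta(c,d) < \epsilon$ and $d$ in the appropriate class.

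First I would handle $\CA$ itself, which is the easy case: given $c$ with radius $r$, pick a large $n$ and define $d$ to agree with $c$ except on a single centered word $w \in \Sigma^{N(n)}$, where we set $d(w) = \tau(c(w))$. Then $D^c_d = \{w\}$ (for the radius $n$), so $\delta(c,d) = |\Sigma|^{-(2n+1)}$, which is less than $\epsilon$ once $n$ is large, and $d \neq c$. This same one-point modification does \emph{not} preserve surjectivity or reversibility in general, so the other two cases need the approximation-by-composition idea advertised in the paragraph preceding the statement.

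For $\SUR$ and $\REV$ the plan is to use the continuity results already established. Given $c \in \REV$ (the $\SUR$ case is analogous and in fact easier), I would pick a reversible automaton $g$ that is \emph{close to the identity} but not equal to it: e.g.\ take $g$ to be a "partial permutation" CA of period two, acting as the identity except that it applies the fixed-point-free involution $\tau$ (or, for $|\Sigma|=2$, the bit flip) to the central symbol of every occurrence of some long centered word $w$ chosen so that the two preimages needed to break injectivity are ruled out — this is exactly the kind of construction used in the Example after the pointwise topology and in the reversibility-of-XOR remark in the introduction. Such a $g$ is reversible (being an involution), satisfies $g \neq \ID$, and has $\delta(g, \ID)$ as small as we like by taking $w$ long. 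Then set $d = c \circ g$. Since $c$ is injective and $g$ is bijective, $d$ is reversible; and $d \neq c$ because $g \neq \ID$ and $c$ is injective (so $c\circ g = c \circ \ID$ would force $g = \ID$). Finally, Lemma~\ref{rcomp} gives $\delta(d, c) = \delta(c \circ g, c \circ \ID) \leq (2r(c)+1)\cdot\delta(g,\ID)$, which is $< \epsilon$ for $w$ long enough. For $\SUR$ the same argument works verbatim with $c \in \SUR$ and $g$ as above: $c \circ g$ is surjective as a composite of surjections, it differs from $c$ (injectivity of $g$ lets us cancel it on the right), and the same Lipschitz bound applies.

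The main obstacle is the $\neq$ part combined with staying inside the class: the one-point modification trick for $\CA$ trivially changes the automaton but can destroy surjectivity, while composing with a near-identity $g$ keeps us inside $\SUR$ or $\REV$ but we must be sure $c \circ g \neq c$ — this is where we use that $g$ is a genuine bijection distinct from $\ID$ and that $c$ is left-cancellable on bijections (any two CA with $c\circ g = c\circ g'$ and $g,g'$ surjective must agree, since surjectivity of $g$ means every configuration is $g(x)$ for some $x$). Producing the near-identity reversible $g \neq \ID$ with $\delta(g,\ID)$ small is routine given the period-two involution construction already sketched in the paper, so I do not expect real difficulty there; one just has to choose the marker word $w$ long enough that no overlap can cause two distinct configurations to collapse, making $g$ an involution and hence reversible.
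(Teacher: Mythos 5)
Your treatment of $\CA$ is fine, and your $\REV$ argument is a correct variant of the paper's proof: the paper puts the near-identity perturbation on the \emph{outside} (it approximates a given $d$ by $c_i \circ d$ and invokes continuity of $\circ$ on $\CA \times \SUR$), whereas you put it on the \emph{inside} ($c \circ g$) and control the distance with the Lipschitz bound of Lemma~\ref{rcomp}; for injective $c$ the distinctness $c \circ g \neq c$ is immediate, so that case goes through. The genuine gap is in your $\SUR$ case, exactly at the step ``$c \circ g \neq c$''. Your justification --- that $c \circ g = c \circ g'$ with $g, g'$ surjective (or bijective) forces $g = g'$ --- is false: cancelling $g$ from the right of $c \circ g = c \circ \ID$ requires \emph{injectivity of $c$}, not of $g$. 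Concretely, take $\Sigma = \{0,1\}$, let $c$ be the XOR automaton $c(x)_j = x_j + x_{j+1} \bmod 2$ (surjective, not injective) and let $g$ be the bijective CA flipping every cell; then $c \circ g = c = c \circ \ID$ although $g \neq \ID$. So for a merely surjective $c$, bijectivity of $g$ alone does not rule out $c \circ g = c$, and your sketch of why it should (``every configuration is $g(x)$ for some $x$'') only yields $c = c \circ g' \circ g^{-1}$, not $g = g'$.

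The fact you need is true for your specific marker involutions, but it requires a real ingredient: surjective CA are pre-injective (Garden of Eden / Moore--Myhill), so they cannot identify two distinct configurations differing in finitely many cells; taking $x$ with a single occurrence of the marker, $x$ and $g(x)$ differ in exactly one cell, hence $c(g(x)) \neq c(x)$ and $c \circ g \neq c$. Alternatively, you can sidestep the issue the way the paper does: approximate the given surjective $c$ by $g_i \circ c$ with the perturbations $g_i$ on the outside. Then distinctness is immediate, because the fixed surjective $c$ is right-cancellable (if $g_i \circ c = g_j \circ c$ then $g_i = g_j$, so at most one term of the sequence can equal $c$), and convergence $g_i \circ c \rightarrow \ID \circ c = c$ follows from the continuity of $(\circ)$ restricted to $\CA \times \SUR$ rather than from Lemma~\ref{rcomp}. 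Either repair works; as written, the $\SUR$ case is not proved.
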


\begin{proof}
Given CA $c$, we construct the CA $c_i$ with $r(c_i) = r(c) + i$, which are equivalent to $c$ on all words of $\Sigma^{N(c_i)}$, except $0^{N(c_i)}$, on which they disagree with $c$. Clearly $c_i \rightarrow c$, but $c_i \neq c$ holds for all $i$, and thus $\CA$ is a perfect space.

Now consider the cellular automata $c_i$ which function as identity maps, except for mapping the central cell of $110^ia0^i11$ to $\tau(a)$. It is clear that $c_i \in \REV$ for all $i$, since these automata keep the occurrences of $11$ untouched, and can only change the central cell between two occurrences. It is also clear that they converge to the identity map.

Now consider an arbitrary surjective CA $d$. All of the $c_i \circ d$ are distinct, since $d$ is surjective and the $c_i$ are distinct, and by the continuity of $(\circ)$ restricted to $\CA \times \SUR$, the sequence $c_i \circ d$ converges to $\ID \circ d = d$. If $d$ is reversible, then so are all $c_i \circ d$. We have obtained that also $\SUR$ and $\REV$ are perfect as topological spaces. 
\end{proof}

\begin{theorem}
\label{thm:SURClosed}
$\SUR$ is closed.
\end{theorem}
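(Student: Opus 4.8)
The plan is to reduce to sequences (legitimate since $(\CA,\delta)$ is a metric space) and to use the balance characterization of surjectivity. So I would take a sequence $c_i \in \SUR$ with $c_i \to c$ and suppose, for contradiction, that $c \notin \SUR$. By the balance property, non-surjectivity yields a word $u$ with $\ell := |u| \ge 1$ such that $k := |c^{-1}(u)| \ne |\Sigma|^{2r(c)}$, where $c$ is taken as the word function of radius $r(c)$. Normalising, this says that the density $\rho := k\,|\Sigma|^{-\ell-2r(c)}$ of $c^{-1}([u]_0)$ in the uniform Bernoulli measure differs from the value $\rho_0 := |\Sigma|^{-\ell}$ forced by balance on any surjective CA; put $\eta := |\rho - \rho_0| > 0$. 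The aim is to prove $\eta \le \ell\,\delta(c,c_i)$ for every $i$, which contradicts $\delta(c,c_i) \to 0$.

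For a fixed $i$, let $R = \max(r(c),r(c_i))$ and view both $c$ and $c_i$ as word functions of radius $R$, i.e.\ as maps $\Sigma^{\ell+2R} \to \Sigma^{\ell}$. The key point is that for each output coordinate $t \in \{0,\dots,\ell-1\}$, the value $c(v)_t$ depends only on the length-$(2R+1)$ input window at position $t$ (which automatically lies inside $v$), and likewise for $c_i$; hence the proportion of $v \in \Sigma^{\ell+2R}$ with $c(v)_t \ne c_i(v)_t$ equals exactly $\delta(c,c_i)$, directly from the definition of the difference set at radius $R$. A union bound over the $\ell$ output coordinates then shows that the set $E = \{v : c(v) \ne c_i(v)\}$ has $|E| \le \ell\,\delta(c,c_i)\,|\Sigma|^{\ell+2R}$.

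It remains to compare the preimage sets $P = \{v : c(v) = u\}$ and $P_i = \{v : c_i(v) = u\}$. Since $c(v) = c_i(v)$ off $E$, we get $P \triangle P_i \subseteq E$. On the other hand, enlarging the radius from $r(c)$ to $R$ multiplies preimage counts by $|\Sigma|^{2(R-r(c))}$, so $|P| = k\,|\Sigma|^{2(R-r(c))} = \rho\,|\Sigma|^{\ell+2R}$; and applying the balance property to $c_i$ (at radius $r(c_i)$, then extended to $R$) gives $|P_i| = |\Sigma|^{2R} = \rho_0\,|\Sigma|^{\ell+2R}$. Therefore
\[ \eta\,|\Sigma|^{\ell+2R} = \big|\,|P| - |P_i|\,\big| \le |P \triangle P_i| \le |E| \le \ell\,\delta(c,c_i)\,|\Sigma|^{\ell+2R}, \]
which yields $\eta \le \ell\,\delta(c,c_i)$ and the desired contradiction, so $c \in \SUR$.

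I expect the only real friction to be the index bookkeeping: fixing $R$, checking that as a radius-$R$ word function $c$ sends $\Sigma^{\ell+2R}$ to $\Sigma^{\ell}$ with every output window contained in the input, and confirming that the per-coordinate disagreement proportion is \emph{exactly} $\delta(c,c_i)$ (so that the union bound is not lossy), together with the elementary fact that enlarging a radius scales preimage counts by a fixed power of $|\Sigma|$. Everything else is a union bound plus arithmetic. One could phrase the argument purely measure-theoretically — the maps $d \mapsto \mu(d^{-1}([u]_0))$ are Lipschitz in $\delta$, and balance pins this quantity to $|\Sigma|^{-\ell}$ on $\SUR$ — but the combinatorial version belongs in this section.
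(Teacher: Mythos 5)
Your proof is correct and follows essentially the same route as the paper's: both arguments take a word witnessing the failure of the balance property for the non-surjective limit, pin the surjective automaton's preimage count via balance, and bound the resulting discrepancy in preimage counts by a union bound over output coordinates using the difference set, yielding a lower bound on $\delta(c,d)$ depending only on $c$. The only differences are cosmetic (sequences and a symmetric-difference comparison versus the paper's direct exhibition of an $\epsilon$-ball disjoint from $\SUR$, and an unbalanced word with count $\neq |\Sigma|^{2r(c)}$ versus one with strictly more preimages).
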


\begin{proof}
Let $c \notin \SUR$ and $d \in \SUR$. We show that $\delta(c, d)$ is bounded from below by a positive constant depending only on $c$. That is, we will prove there exists $\epsilon > 0$ such that $B_\epsilon(c) \cap \SUR = \emptyset$.

We may assume $r(d) \geq r(c)$. Since $c$ is not surjective, there exists a word $v \in \Sigma^{N(n)}$ for some $n$ such that
\[ \abs{c^{-1}(v)} \geq \abs{\Sigma}^{2r(c)} + 1. \]
Now let $w \in \Sigma^{N(r(d)+n)}$ be such that $c(w)_{r(d)} = v$. We either have $d(w) = v$, or $w_{N(d)+i} \in D^c_d$ for some $i \in N(n)$. In terms of cardinalities this implies
\begin{align*}
& \quad \abs{d^{-1}(v)} + (2n+1) \cdot \abs{D^c_d} \cdot \abs{\Sigma}^{2n} \\
&\geq \abs{c^{-1}(v)} \cdot (\abs{\Sigma}^{2(r(d) - r(c))}) \\
&\geq (|\Sigma|^{2r(c)} + 1)(|\Sigma|^{2(r(d) - r(c))}),
\end{align*}
and since $\abs{d^{-1}(v)} = \abs{\Sigma}^{2r(d)}$ by the balance property, it follows that $\abs{D^d_c} \geq (2n+1)^{-1}\cdot\abs{\Sigma}^{2(r(d) - r(c) - n)}$. But then $\delta(c, d) \geq n^{-1}\cdot|\Sigma|^{-2(r(c)+n)}$, and the claim is proved. 
\end{proof}

In fact, by a result in \cite{KaVaZe09}, one can choose $n \leq \abs{\Sigma}^{2r(c)}$ in the previous proof, yielding a bound
\[ \delta(c, d) \geq \abs{\Sigma}^{-2(r(c)-\abs{\Sigma}^{2r(c)})}. \]

Like surjectivity and inversion, commuting with a fixed cellular automaton is another intuitively `one-step' property, and thus should behave well in our topology. Using the continuity of $\circ$, we indeed verify this intuition, and prove that the limit of CA commuting with a given surjective CA also commutes with the CA.

\begin{example}
The commutator of a surjective CA $c$ is closed. Consider the function $d \mapsto \delta(c \circ d, d \circ c)$. Since composition from both left and right by a surjective cellular automaton is continuous and the metric is a continuous function $\CA^2 \to \R$, we obtain that the preimage of $0$ is a closed set. But obviously this is just the commutator.
\end{example}

\begin{example}
For each $p \geq 1$, the set of $p$-periodic CA is closed. Namely, if $c_i \rightarrow c$ and $c_i^p = \ID$ for all $i$, then by the continuity of composition on $\REV$ we have $c^p = \ID$.
\end{example}

As opposed to the above example and the case of $\SUR$, the set $\REV$ of reversible cellular automata is not a closed subset of the uniform Bernoulli space.

\begin{example}
\label{ex:REVNotClosed}
The set $\REV$ is not closed if $|\Sigma| \geq 3$. Let $c$ be the CA
\[
c(x)_j = \left\{ \begin{array}{ll}
  x_j + x_{j-1} \bmod 2, & \mbox{if } x_j \in \{0, 1\} \\
  2, & \mbox{otherwise}
\end{array} \right.
\]
That is, $c$ is the XOR-with-left-neighbor automaton, which addionally fixes $2$s and interprets them as $0$s. We let $c_i$ behave as $c$, except for fixing the current cell if they see no $2$'s in the neighborhood $N(i)$. Since $\frac{2^{\abs{N(i)}}}{3^{\abs{N(i)}}} \longrightarrow 0$, we see that $c_i \longrightarrow c$.

Of course, $c$ is not in $\REV$, since $^\infty 0 ^\infty$ and $^\infty 1 ^\infty$ have the same image. All of $c_i$, however, are in $\REV$: Let $c_i(x) = c_i(y)$ for some $x \not= y$. Since $2$'s are never created or destroyed, $x$ and $y$ have $2$'s in the same coordinates, and thus for both $x$ and $y$, the set of coordinates that are fixed when applying XOR are equal. In every point, a fixed coordinate must occur at least every $|N(i)|$ steps, since if no $2$'s occur in a block of size $|N(i)|$, the middle cell is fixed. But these two facts clearly imply $x$ and $y$ are equal, which proves the claim.
\end{example}

We show that the uniform Bernoulli space is not complete either, which strengthens the intuition given in Example~\ref{ex:NotPrecompact} that sequences of cellular automata quite rarely converge to anything (as one might expect).

\begin{example}
The space $\CA$ is not complete. We will show this by constructing a Cauchy sequence $(c_i)$ in $\CA$ without a limit. Start with the identity CA $c_1$ and for each $i > 1$, define $c_i$ recursively as follows: Let $(n_i)$ be a sequence in $\N$ with $n_i \geq i$ and the property that
\begin{equation}
\label{eq:d-property}
\sum_{i=1}^{k-1} \abs{\Sigma}^{-n_{i+j}} \leq \frac{1}{2} \cdot \abs{\Sigma}^{-\sum_{i=1}^j n_i}
\end{equation}
for all $j,k \in \N$ with $j<k$. This is the case, if $(n_i)$ grows fast enough.

Let $r_i = r(c_i) = \sum_{k=1}^i n_k$, and for all words $w \in \Sigma^{N(r_i)}$, define
\[ c_i(w) = \left\{ \begin{array}{ll} \tau(c_{i-1}(w)_0), & \mbox{if } w_{[r(c_i)-n_i+1,r(c_i)]} = 0^{n_i} \\ c_{i-1}(w)_0, & \mbox{otherwise.} \end{array} \right. \]
Then $\delta(c_i,c_{i+1})=\abs{\Sigma}^{-n_i} \leq \abs{\Sigma}^{-i}$ for all $i$, and by the triangle inequality
\[ \delta(c_i,c_{i+k}) \leq \sum_{j=0}^{k-1} \delta(c_{i+j},c_{i+j+1}) \leq \abs{\Sigma}^{-i}\sum_{j=0}^{k-1} \abs{\Sigma}^{-j} < \frac{\abs{\Sigma}^{-i}}{1-\abs{\Sigma}^{-1}}, \]
so the sequence $(c_i)$ is Cauchy.

Moreover, for each $i,k \in \N$ with $i<k$, equation \eqref{eq:d-property} implies that
\begin{align*}
\delta(c_i,c_{i+k}) &\geq \delta(c_i,c_{i+1}) - \sum_{j=1}^{k-1} \delta(c_{i+j},c_{i+j+1}) \\
 &= \abs{\Sigma}^{-n_i} - \sum_{j=1}^{k-1} \abs{\Sigma}^{-n_{i+j}} \geq  \frac{1}{2} \cdot \abs{\Sigma}^{-n_i},
\end{align*}
so none of the terms $c_i$ can be a limit for the sequence.

Consider then an arbitrary CA $c$ not in the sequence $(c_i)$. We may choose it to have the radius $r(c_i)=\sum_{k=1}^i n_k$ for some large enough $i$. We show that $c$ is not a limit for $(c_i)$. First, note that $\delta(c,c_i) \geq \abs{\Sigma}^{-r(c_i)}$ holds. But now equation \eqref{eq:d-property} implies that
\begin{align*}
\delta(c,c_{i+k}) &\geq \delta(c,c_i) - \sum_{j=0}^{k-1} \delta(c_{i+j},c_{i+j+1}) \\
 &\geq \abs{\Sigma}^{-r(c_i)} - \sum_{j=1}^{k-1} \abs{\Sigma}^{-n_{i+j}} \geq  \frac{1}{2} \cdot \abs{\Sigma}^{-r(c_i)}
\end{align*}
for all $k>i$. Thus $c$ is not a limit for $(c_i)$.
\end{example}

Finally, let us briefly discuss the homomorphisms of \cite{Ka96} and \cite{BoLiRu88}, and some dynamical notions, in the uniform Bernoulli space and the pointwise topology.

\begin{definition}
Let $r$ be a radius of $c \in \REV$, and let
\[ R_c = \{(x_{[0, 2r-1]}, c(x)_{[-r, r-1]}) \;|\; x \in \Sigma^\Z\}, \]
the set of right stairs of $c$. We define $h_+(c) = \frac{|R_c|}{|\Sigma|^{3r}} \in \Q$.
\end{definition}

The map $h_+ : \REV \to \Q$ is in fact a group homomorphism from $(\REV, \circ)$ to $(\Q_{>0}, \cdot)$, and it compares the information flows to the left and to the right in the evolution given by an automaton \cite{Ka96}. We show by example that this function is not continuous, even though its definition is concerned with only one step.

\begin{example}
The morphism $h_+$ is not continuous at least with the alphabet $\Sigma = \{0, 1\}\times\{0, 1\}$. First, it is easy to see that for all $i$, there exists a length $k_i$ and a finite set of words $W_i \subset \Sigma^{k_i}$ with $|W_i| < \frac{1}{i}|\Sigma|^{k_i}$, such that $W_i$ defines an empty SFT when taken as the set of forbidden patterns. For $y \in \{0, 1\}^\Z$, define $J_i(y) = \{j \in \Z \;|\; y_{[j,j+k_i-1]} \in W_i\}$. Clearly, there must exist $n$ such that $\Z - J_i(y)$ does not contain an interval of length $n$ for any $y$, that is, coordinates $j$ such that $y_{[j,j+k_i-1]} \in W_i$ must occur with bounded gaps. For $j \in J_i(y)$, we define $P_i(j, y)$ as largest $j' \in J_i(y)$ such that $j' < j$.

We now define
\[ c_i(x)_j = \left\{\begin{array}{ll}
x_j, & \mbox{if } j \notin J_i(\pi_1(x)) \\
(\pi_1(x)_j, \pi_2(x)_{j'}), & \mbox{if } j \in J_i(\pi_1(x)),
\end{array}\right.\]
where $j' = P_i(j, \pi_1(x))$. That is, we shift information to the right at the positions marked by $W_i$.

From $\frac{|W_i|}{|\Sigma|^{k_i}} \rightarrow 0$, it clearly follows that $c_i \rightarrow \ID$. It is also easy to see that $h_+(c_i) = 2$ for all $i$.
\end{example}

We note that there is a more well-known homomorphism defined on $(\REV, \circ)$ called the \emph{gyration function}, defined in \cite{BoLiRu88}. Since this function is based on behavior on periodic points, it is very easy to find a counterexample for continuity in the uniform Bernoulli space, and a proof of continuity in the pointwise topology, when the codomain $\prod_{n = 1}^\infty \Z/n\Z$ is given the product topology.

The previous example also shows that sensitive cellular automata are not a closed set, as each $c_i$ is sensitive, but their limit is not.

\begin{example}
If $c \in \CA$, define the \emph{entropy} of $c$ as
\[ h(c) = \lim_{r \rightarrow \infty} \lim_{t \rightarrow \infty} \frac{\log N_c(r,t)}{t}, \]
where $N_c(r,t)$ is the number of different $r \times t$-rectangles in all spacetime diagrams of $c$. The entropy function is not continuous in general, even in $\REV$: Let $\Sigma=\{0,1\}^2$, and define the CA $c_i$ as the identity CA, except that maximal patterns of the form
\[ {}^{a_1}_{b_1}\left({}^0_0\right)^i{}^1_1{}^{a_2}_{b_2}\left({}^0_0\right)^i{}^1_1 \cdots \left({}^0_0\right)^i{}^1_1{}^{a_n}_{b_n}, \]
where $n \geq 2$ and the $a_j, b_j \in \Sigma$ are not part of any word $\left({}^0_0\right)^i{}^1_1$, are rotated to
\[ {}^{b_1}_{b_2}\left({}^0_0\right)^i{}^1_1{}^{a_1}_{b_3}\left({}^0_0\right)^i{}^1_1 \cdots \left({}^0_0\right)^i{}^1_1{}^{a_{n-1}}_{a_n}. \]
This is clearly doable with a local rule that checks its surroundings for the marker patterns, and the resulting CA are reversible with $c_i \rightarrow \ID$. Now given a word $w \in \Sigma^r$ and $t \in \N$, we append the word
\[ \left({}^0_0\right)^i{}^1_1{}^{a}_{b_1}\left({}^0_0\right)^i{}^1_1{}^{a}_{b_2}\left({}^0_0\right)^i{}^1_1 \cdots \left({}^0_0\right)^i{}^1_1{}^{a}_{b_{t-1}} \]
to $w$, producing in the spacetime diagram an $r \times t$-rectangle with $w$ on the bottom and $b_1b_2\cdots b_{t-1}$ on the lower track of the right border. Thus we have that $N_{c_i}(r,t) \geq 4^r \cdot 2^{t-1}$, so
\[ h(c_i) \geq \lim_{r \rightarrow \infty} \lim_{t \rightarrow \infty} \frac{r \log 4 + (t-1) \log 2}{t} = \log 2. \]
But the identity CA has zero entropy: $h(\ID)=0$.
\end{example}

We conclude this section with the following two open problems.

\begin{question}
Is $h_+ : \REV \to \Q$ continuous when $\REV$ has the pointwise topology?
\end{question}

\begin{conjecture}
Transitive CA are not closed in the uniform Bernoulli space.
\end{conjecture}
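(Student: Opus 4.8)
The plan is to exhibit transitive cellular automata $c_i$ converging, in the uniform Bernoulli metric $\delta$, to a CA that is not transitive. A transitive CA is necessarily surjective — its image is closed (continuous image of a compact space) and dense (by transitivity), hence all of $\Sigma^\Z$ — so, since $\SUR$ is closed by Theorem~\ref{thm:SURClosed}, any limit of transitive CA is surjective. The obvious candidate for the limit is then the identity $\ID$: it is surjective, but it is not transitive, since $\ID^n([0]_0)=[0]_0$ never meets $[1]_0$. So it suffices to place $\ID$ in the closure of the set of transitive CA, i.e.\ to build transitive $c_i$ with $\delta(c_i,\ID)\to 0$ (or, more ambitiously, to show the transitive CA are dense in $\SUR$).

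For the $c_i$ I would combine a fixed ``engine'' with sparse ``triggers''. Fix any topologically mixing CA $e$ on $\Sigma$ — for $\Sigma=\{0,1\}$ one may take the additive rule $e(x)_j = x_{j-1}+x_{j+1} \bmod 2$ — and, exactly as in the $h_+$ example above, for each $i$ choose a length $\ell_i$ and a set $W_i\subseteq\Sigma^{\ell_i}$ that forbids no bi-infinite configuration, so that occurrences of $W_i$-words recur with uniformly bounded gaps in every $x\in\Sigma^\Z$, while still $\abs{W_i}\cdot\abs{\Sigma}^{-\ell_i}<\abs{\Sigma}^{-i}$. Writing $M_i(x)=\{j\in\Z \mid x_{[j,j+\ell_i-1]}\in W_i\}$ for the (nonempty, boundedly gapped) set of marked coordinates of $x$, let
\[
c_i(x)_j = \begin{cases} e(x)_j, & j\in M_i(x),\\ x_j, & j\notin M_i(x).\end{cases}
\]
This is a CA, and $c_i(w)_0\neq w_0$ is possible only if $w$ contains a $W_i$-word within a bounded distance of its centre; hence $\delta(c_i,\ID)\le(\mathrm{const})\cdot\abs{W_i}\abs{\Sigma}^{-\ell_i}<(\mathrm{const})\cdot\abs{\Sigma}^{-i}\to 0$, so $c_i\to\ID$.

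The remaining step — and the one I expect to be the real obstacle, this being only a conjecture — is to prove that each $c_i$ is transitive. The heuristic is that the marked coordinates, though of vanishing density, form an everywhere-present ``scaffold'' along which the mixing engine $e$ can gradually transport and scramble information; one would try to show that for a dense set of configurations, any prescribed finite pattern can be produced in any prescribed window after finitely many iterations, by steering data along successive marker occurrences (which themselves drift as the configuration evolves), thereby producing a dense orbit. Making this precise is delicate: one must track how $M_i(\cdot)$ evolves under $c_i$ — markers are continually created and destroyed — show that the local applications of $e$ compose coherently over many time steps, and rule out the existence of a proper closed $c_i$-invariant set with nonempty interior. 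A possibly more tractable variant is to design $c_i$ so that on a subshift $X_i$ of measure tending to $1$ it is topologically conjugate to (or a topological factor of) a mixing CA, while the dynamics off $X_i$ is funnelled into $X_i$; but verifying genuine topological transitivity of $c_i$ on the whole full shift is the crux under any of these approaches.
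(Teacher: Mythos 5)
This statement is posed in the paper as a \emph{conjecture}: the authors give no proof, and your proposal does not supply one either. The part you actually establish is the easy half, and it is correct: transitive CA are surjective, $\SUR$ is closed (Theorem~\ref{thm:SURClosed}), so the natural plan is to exhibit transitive automata $\delta$-converging to a surjective but non-transitive CA such as $\ID$; and your marker construction (choose $W_i$ defining an empty SFT with $\abs{W_i}\abs{\Sigma}^{-\ell_i}$ small, apply a mixing rule $e$ only at marked cells) does give CA $c_i$ with $\delta(c_i,\ID)\to 0$. But the decisive claim --- that each $c_i$ is transitive --- is precisely the content of the conjecture, and you offer only a heuristic for it, as you yourself concede. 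A plan whose crux is ``one would try to show\ldots'' is not a proof, so there is a genuine gap, and it sits exactly where the difficulty of the problem lies.

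Beyond the admitted incompleteness, there are concrete reasons to doubt that your specific $c_i$ can be pushed through. You never verify that $c_i$ is even surjective, which is a necessary condition for transitivity; applying XOR at marked cells and the identity elsewhere is not obviously injective on periodic points nor balanced. Compare Example~\ref{ex:REVNotClosed} in the paper: there the analysis succeeds only because the ``markers'' (the letter $2$) are \emph{invariant} under the automaton, so the set of active cells never changes along the orbit. In your construction $M_i(x)$ is built from the very letters the rule rewrites, so markers are created and destroyed at each step and you lose all control over the scaffold you want to transport information along; if you instead try to make the marker structure invariant, you collide with the requirement that markers occur with bounded gaps in \emph{every} configuration while having density tending to $0$ and being made of scramblable symbols. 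Finally, transitivity requires more than local mixing at sparse sites: you must exhibit, for arbitrary cylinders $[u]_k$ and $[v]_m$, an iterate carrying one into the other, i.e.\ a mechanism moving arbitrary finite data across arbitrarily long unmarked stretches, and you must rule out proper closed $c_i$-invariant sets with nonempty interior; nothing in the sketch addresses either point. So the proposal is a reasonable strategy statement, but the conjecture remains open under it.
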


\section{Topologizing the CA by measuring preimages: the measure theoretic approach}
\label{sec:Measures}

\begin{definition}
Let $\mu$ be a measure on $\Sigma^\Z$. We define the pseudometric
\[ \delta^\mu (c, d) = \mu([D^c_d]_0). \]
In this way each measure $\mu$ gives a topology for $\CA$. We denote $c_i \stackrel{\mu}{\rightarrow} c$ if $c_i \rightarrow c$ holds in this topology.
\end{definition}

\begin{remark}
The pseudometric $\delta^\mu$ is a metric iff $\mu$ has full support.
\end{remark}

It is easy to see that $c_i \stackrel{\mu}{\rightarrow} c$ holds iff
\[ \mu(c^{-1}(C) \triangle c_i^{-1}(C)) \rightarrow 0 \]
holds for all clopen sets $C \subseteq \Sigma^\Z$. As we noted in the previous section, the pseudometric defined by the uniform Bernoulli measure coincides with that of the uniform Bernoulli space.

We now finish the proof of non-sequentiality in Theorem~\ref{thm:PointwiseProperties}. The result follows from the following connection between the difference set topologies and the pointwise topology. Note that since the pointwise topology is not sequential, Proposition~\ref{prop:Finer} does \emph{not} imply that it would be finer than the difference set topologies.

\begin{lemma}
Let $\mu$ be a measure on $\Sigma^\Z$, and let $X_1, X_2, \ldots$ be a countably infinite family of Borel sets with $\mu(X_i) > \epsilon$ for some $\epsilon > 0$. Let $X$ be the set of points $x \in \Sigma^\Z$ which appear in infinitely many of the $X_i$. Then $\mu(X) \geq \epsilon$.
\end{lemma}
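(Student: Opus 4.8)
The plan is to recognize the set $X$ as the measure-theoretic $\limsup$ of the sequence $(X_i)$ and then apply continuity of $\mu$ from above. Concretely, set $Y_n = \bigcup_{i \geq n} X_i$ for each $n \in \N$. Each $Y_n$ is a countable union of Borel sets, hence Borel, and a point $x$ lies in infinitely many of the $X_i$ precisely when it lies in $Y_n$ for every $n$; thus $X = \bigcap_{n \in \N} Y_n$, which is Borel as well.

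Next I would observe the two elementary facts that drive the estimate. First, $Y_n \supseteq X_n$, so $\mu(Y_n) \geq \mu(X_n) > \epsilon$ for every $n$ by monotonicity of $\mu$. Second, the sequence $(Y_n)_{n \in \N}$ is decreasing, since enlarging $n$ only removes sets from the union. Because $\mu$ is a probability measure — in particular finite — continuity from above applies to the decreasing sequence $(Y_n)$, giving
\[ \mu(X) = \mu\Big( \bigcap_{n \in \N} Y_n \Big) = \lim_{n \to \infty} \mu(Y_n). \]
Each term of this convergent sequence exceeds $\epsilon$, so the limit is at least $\epsilon$, which yields $\mu(X) \geq \epsilon$ and completes the argument.

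There is no real obstacle here: the only point worth flagging is that continuity from above requires the measure to be finite, which is guaranteed by our standing convention that all measures are probability measures; without that, the conclusion could fail. (Note also that the inequality is not in general strict — the limit of a sequence of numbers each $> \epsilon$ need only be $\geq \epsilon$ — which is why the statement asserts $\mu(X) \geq \epsilon$ rather than $> \epsilon$.)
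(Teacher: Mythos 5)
Your proof is correct: identifying $X = \bigcap_{n \in \N} \bigcup_{i \geq n} X_i$, bounding $\mu\bigl(\bigcup_{i \geq n} X_i\bigr) \geq \mu(X_n) > \epsilon$, and applying continuity from above (valid since $\mu$ is a probability measure) is exactly the standard argument, and your remark that the conclusion can only be the non-strict inequality $\mu(X) \geq \epsilon$ is also right. The paper states this lemma without proof, treating it as a known measure-theoretic fact, and the argument it implicitly relies on is precisely the one you gave.
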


\begin{proposition}
\label{prop:Finer}
Let $\mu$ be a measure in $\Sigma^\Z$. If $c_i \rightarrow c$ in the pointwise topology, then $\delta^\mu(c_i, c) \rightarrow 0$.
\end{proposition}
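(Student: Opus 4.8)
The plan is to reduce the statement to the combinatorial lemma just stated. First I would note that, for any common radius $r$ of $c_i$ and $c$, the cylinder set $[D^{c_i}_c]_0$ is exactly $\{x \in \Sigma^\Z \mid c_i(x)_0 \neq c(x)_0\}$: a centered word $w \in \Sigma^{N(r)}$ lies in $D^{c_i}_c$ iff $c_i(w)_0 \neq c(w)_0$, and $x_{N(r)} \in D^{c_i}_c$ iff $c_i(x)_0 \neq c(x)_0$. In particular this set is independent of the chosen common radius, and $\delta^\mu(c_i, c) = \mu(\{x \mid c_i(x)_0 \neq c(x)_0\})$. By definition of the pointwise topology, $c_i \rightarrow c$ says precisely that for each $x \in \Sigma^\Z$ we have $c_i(x)_0 = c(x)_0$ for all large enough $i$; equivalently, no point of $\Sigma^\Z$ lies in infinitely many of the sets $[D^{c_i}_c]_0$.

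Next I would argue by contradiction. Assuming $\delta^\mu(c_i, c) \not\rightarrow 0$, pick $\epsilon > 0$ and indices $i_1 < i_2 < \cdots$ with $\mu([D^{c_{i_k}}_c]_0) > \epsilon$ for all $k$. Applying the preceding lemma to the Borel sets $X_k = [D^{c_{i_k}}_c]_0$, the set $X$ of points lying in infinitely many $X_k$ has $\mu(X) \geq \epsilon > 0$, so it is nonempty. Any $x \in X$ then satisfies $c_{i_k}(x)_0 \neq c(x)_0$ for infinitely many $k$, hence lies in infinitely many $[D^{c_i}_c]_0$, contradicting $c_i \rightarrow c$ in the pointwise topology. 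Therefore $\delta^\mu(c_i, c) \rightarrow 0$.

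I expect essentially all of the content to sit in the preceding lemma, which is a measure-theoretic statement of Borel--Cantelli flavour and does the real work; everything else is bookkeeping. The only step needing a little care is the identification of $[D^{c_i}_c]_0$ with the set of configurations on which $c_i$ and $c$ disagree at coordinate $0$, together with its independence of the chosen common radius --- but this is immediate from the definition of the difference set, and was already observed after Definition~\ref{delta}. Finally, although the pointwise topology fails to be sequential, we are here proving the ``easy'' implication, so the hypothesis $c_i \rightarrow c$ is used directly and no sequentiality subtlety arises.
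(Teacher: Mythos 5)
Your proof is correct and follows essentially the same route as the paper: identify $[D^{c_i}_c]_0$ with the set of configurations where $c_i$ and $c$ disagree at coordinate $0$, assume $\delta^\mu(c_i,c)\not\rightarrow 0$, and apply the preceding Borel--Cantelli-type lemma to obtain a nonempty set of points disagreeing infinitely often, contradicting pointwise convergence. Your explicit passage to a subsequence is just a more careful rendering of the paper's ``without loss of generality'' step.
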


\begin{proof}
Assume the contrary. Without loss of generality, there then exists a positive $\epsilon$ with $\delta_\mu(c_i, c) > \epsilon$ for all $i$. Denoting $X_i = [D^c_{c_i}]_0$, the conditions of the above lemma are satisfied, and the set
\[ X = \{x \in \Sigma^\Z \;|\; c_i(x)_0 \neq c(x)_0 \mbox{ for infinitely many } i \} \]
has $\mu$-measure at least $\epsilon$. In particular, $X$ is nonempty, which contradicts the fact that $c_i \rightarrow c$ in the pointwise topology. 
\end{proof}

The following result shows that the difference set topologies are quite numerous and varied in form.

\begin{theorem}
If $\mu$ and $\nu$ are measures on $\Sigma^\Z$, $\mu \neq \nu$, and $\mu$ is ergodic, then $\mu$ and $\nu$ induce distinct topologies on $\CA$.
\end{theorem}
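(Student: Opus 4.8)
The plan is to exhibit, for any two distinct measures $\mu \neq \nu$ with $\mu$ ergodic, a pair of cellular automata whose $\delta^\mu$-distance differs from their $\delta^\nu$-distance — in fact I will find a single automaton $c$ and the identity $\ID$ such that $\delta^\mu(c,\ID)$ and $\delta^\nu(c,\ID)$ are provably unequal. Recall $\delta^\mu(c,\ID) = \mu([D^c_{\ID}]_0)$, where $D^c_{\ID}$ is (up to radius bookkeeping) the set of centered windows $w$ with $c(w)_0 \neq w_0$. So the task reduces to: find a clopen set $E \subseteq \Sigma^\Z$ of the form $[W]_0$ for a set $W$ of centered words (equivalently, a cylinder set depending on finitely many coordinates around $0$) with $\mu(E) \neq \nu(E)$, and realize $E$ as the ``disagreement locus'' of some CA versus the identity.

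The first step is the separation of $\mu$ and $\nu$ by a cylinder. Since $\mu \neq \nu$ and both are Borel probability measures on $\Sigma^\Z$, and cylinders generate the Borel $\sigma$-algebra, there must exist a cylinder $[u]_n$ with $\mu([u]_n) \neq \nu([u]_n)$; by shift-invariance of $\mu$ (and $\nu$ if it is shift-invariant — but note the statement does not assume $\nu$ shift-invariant, only $\mu$ ergodic, so I should be careful here and instead just translate the window to be centered, which is harmless since any cylinder is a finite-coordinate event) we may assume the cylinder is a centered one, say $E = [v]_0$ for some $v \in \Sigma^{N(r)}$, with $\mu(E) \neq \nu(E)$. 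The second step is to build a CA $c$ with radius (at least) $r$ such that $D^c_{\ID}$, computed at radius $r$, equals exactly $\{v\}$ — concretely, let $c$ agree with the identity everywhere except that on reading the centered window $v$ it outputs $\tau(v_0)$ instead of $v_0$ (using the fixed-point-free permutation $\tau$). This is a legitimate local rule, so $c \in \CA$, and by construction $[D^c_{\ID}]_0 = [v]_0 = E$. Hence $\delta^\mu(c,\ID) = \mu(E) \neq \nu(E) = \delta^\nu(c,\ID)$, so the two pseudometrics differ on the pair $(c,\ID)$, and therefore the induced topologies cannot coincide (two pseudometrics inducing the same topology must at least agree on which pairs are at distance zero, and more is true here: we get a concrete pair where one topology ``sees'' a separation the other scales differently — to be fully rigorous I would argue that equal topologies from pseudometrics of this bounded ``cylinder-measure'' type would force the measures to agree on all cylinders, hence be equal).

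The place where ergodicity of $\mu$ must actually be used — and the main subtlety I expect — is exactly in that last rigor point, and in handling the case where $\nu$ fails to be shift-invariant or fails to be separated from $\mu$ by a \emph{centered} cylinder in the naive way. If $\nu$ is not shift-invariant, the event $[D^c_{\ID}]_0$ is genuinely a centered cylinder and the argument above is unaffected; the role of $\mu$ being ergodic is presumably to rule out the degenerate possibility that $\mu$ and $\nu$ induce the same topology \emph{despite} being different — e.g. one could worry that $\nu \gg \mu$ and $\mu \gg \nu$ fail but the \emph{null sets among cylinders} still coincide; here the fact recalled in Section~\ref{sec:Defs}, that if $\mu$ is ergodic, $\nu$ shift-invariant and $\mu \gg \nu$ then $\mu = \nu$, is the lever. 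Concretely, I would show: if $\mu$ and $\nu$ induce the same topology on $\CA$, then $\delta^\mu(c,d) = 0 \iff \delta^\nu(c,d) = 0$ for all $c,d$, which translates (via the $\{v\}$-disagreement construction for every centered word $v$, and countable unions) into $\mu$ and $\nu$ having the same null sets among clopen sets, hence $\mu \gg \nu$ and $\nu \gg \mu$ on the Borel $\sigma$-algebra by regularity; then ergodicity of $\mu$ forces $\mu = \nu$, the desired contradiction. The one genuinely delicate step is thus the bridge from ``same topology'' to ``same null cylinders'' — a priori two metrics can be topologically equivalent while wildly distorting distances, so I must extract from topological equivalence only the qualitative information (distance zero, and perhaps: a sequence $c_i \to c$ in one iff in the other) and show that this qualitative information already pins down the null sets, using that the family of CA $c$ with $[D^c_{\ID}]_0$ equal to a prescribed clopen set is rich enough to probe every clopen set.
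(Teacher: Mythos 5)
There is a genuine gap, and you have correctly located it yourself: producing one pair $(c,\ID)$ with $\delta^\mu(c,\ID)\neq\delta^\nu(c,\ID)$ says nothing about the topologies (compare $d$ and $2d$), so everything rests on your proposed bridge from ``same topology'' to ``$\mu$ and $\nu$ have the same null sets''. That bridge fails. From equality of the topologies the only pointwise-quantitative information you can extract about distances is the zero-distance relation, i.e.\ which clopen sets $[W]_0$ are null; and ``same null clopen sets'' does \emph{not} imply mutual absolute continuity ``by regularity''. Concretely, let $\mu$ and $\nu$ be two distinct full-support Bernoulli measures on $\{0,1\}^\Z$: the only null clopen set for either is $\emptyset$, both $\delta^\mu$ and $\delta^\nu$ are genuine metrics, and the zero-distance relation is plain equality in both cases, so it carries no information whatsoever — yet these measures are mutually singular and, by the theorem you are trying to prove, they do induce different topologies. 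Regularity lets you approximate a Borel null set from outside by open sets of small measure, but an open set is only an increasing union of clopen sets, and no clopen set of small measure need contain the null set; so the step ``hence $\mu\gg\nu$ and $\nu\gg\mu$'' is simply false, and with it the appeal to ergodicity collapses.

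The information you must use instead is sequential: pseudometric topologies are first-countable, so if the topologies coincided, a sequence would converge in $\delta^\mu$ iff it converges in $\delta^\nu$. This is exactly how the paper argues, in contrapositive form. Since $\mu$ is ergodic and $\mu\neq\nu$ (with $\nu$ shift-invariant, which is what the background fact from Section~\ref{sec:Defs} needs), one gets $\mu\not\gg\nu$, hence a Borel set $B$ with $\mu(B)=0$ and $\nu(B)=q>0$. Using outer regularity for \emph{both} measures simultaneously one finds open $U_i\supseteq B$ with $\mu(U_i)<i^{-1}$ and $\nu(U_i)<q+i^{-1}$, and then clopen $C_i=[W_i]_0\subseteq U_i$ with $\nu(C_i)>q-i^{-1}$, so $\mu(C_i)\rightarrow 0$ while $\nu(C_i)\rightarrow q$. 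Your own device of flipping the central cell by $\tau$ on the words of $W_i$ then yields CA $c_i$ with $\delta^\mu(\ID,c_i)=\mu(C_i)\rightarrow 0$ but $\delta^\nu(\ID,c_i)=\nu(C_i)\rightarrow q>0$: the sequence converges to $\ID$ in one topology and not in the other, which is the closure-theoretic witness your argument is missing. So your CA construction is the right tool, but it has to be applied to a \emph{sequence} of clopen sets approximating a set where the measures disagree in the absolute-continuity sense, not to a single cylinder where they merely take different values.
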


\begin{proof}
First, it is necessarily true that $\mu \not\gg \nu$, so we have a measurable set $B \subseteq \Sigma^\Z$ such that $\mu(B)=0$ and $\nu(B)=q>0$. By regularity, for every $i \in \N$ we find an open set $U_i \supseteq B$ such that $\mu(U_i) < i^{-1}$ and $\nu(U_i) < q + i^{-1}$. We can suppose that $U_{i+1} \subseteq U_{i}$ for all $i$.

For every $i$ we have an ascending chain of clopen sets $C_i^1 \subseteq C_i^2 \subseteq \ldots \subseteq U_i$ such that $U_i-C_i^n \searrow \emptyset$. From this it follows that $\nu(C_i^n) \rightarrow \nu(U_i)$. Thus, for all $i \in \N$, we can take a number $M_i \in \N$ such that $\nu(C_i) > q - i^{-1}$ holds for the set $C_i = C_i^{M_i}$. If $j \geq i$, we also have that
\[ \nu(C_i \cap C_j) \geq \nu(C_i)+\nu(C_j)-\nu(U_i) > q - 3i^{-1}. \]
Furthermore, it is clear that $\mu(C_i) \leq \mu(U_i) < i^{-1}$.

Let $W_i \subseteq \Sigma^{N(r_i)}$ be a finite set of words such that $C_i = [W_i]_0$. Define a sequence in $\CA$ as follows. The CA $c_i$ acts as the identity except on words of $W_i$, on which it applies the permutation $\tau$ to the central cell. Now $D^{\mbox{\scriptsize \ID}}_{c_i} = W_i$, so that
\[ \delta^\mu(\ID,c_i) = \mu(C_i) \rightarrow 0 \]
but
\[ \delta^\nu(\ID,c_i) = \nu(C_i) \rightarrow q > 0. \]
Thus $c_i \stackrel{\mu}{\rightarrow} \ID$ holds, but $c_i \stackrel{\nu}{\rightarrow} \ID$ does not, which proves the claim. 
\end{proof}

\begin{corollary}
All Bernoulli measures induce distinct topologies on $\CA$. No other shift invariant measure can induce these topologies.
\end{corollary}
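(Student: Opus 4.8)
The plan is to obtain both assertions as immediate consequences of the previous theorem, the only extra ingredient being the fact, recorded in Section~\ref{sec:Defs}, that every Bernoulli measure is ergodic.

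\emph{Distinct Bernoulli measures give distinct topologies.} I would start from two Bernoulli measures $\mu_p$ and $\mu_q$ with $p \neq q$ as probability vectors on $\Sigma$. Then $\mu_p \neq \mu_q$ as measures, since $\mu_p([a]_0) = p(a) \neq q(a) = \mu_q([a]_0)$ for some $a \in \Sigma$. As $\mu_p$ is ergodic, the previous theorem, applied with $\mu = \mu_p$ and $\nu = \mu_q$, yields directly that $\mu_p$ and $\mu_q$ induce distinct topologies on $\CA$.

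\emph{No other shift invariant measure gives such a topology.} Suppose $\nu$ is a shift invariant measure inducing on $\CA$ the same topology as some Bernoulli measure $\mu_p$. If we had $\nu \neq \mu_p$, then, $\mu_p$ being ergodic, the previous theorem would force $\mu_p$ and $\nu$ to induce distinct topologies, contradicting the assumption. Hence $\nu = \mu_p$, so $\nu$ is itself that very Bernoulli measure, and no genuinely different measure can induce it.

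There is essentially no obstacle here: the whole difficulty is packed into the previous theorem, and this corollary merely reads off from it that an ergodic measure is recoverable from the topology it induces, together with the ergodicity of Bernoulli measures. I would only remark in passing that the hypothesis ``shift invariant'' is not actually used in the second part — the same argument shows that among \emph{all} probability measures on $\Sigma^\Z$, only $\mu_p$ itself induces the topology attached to $\mu_p$.
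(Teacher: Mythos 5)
Your derivation of the corollary itself is correct and is exactly the intended one: the paper states it without proof as an immediate consequence of the preceding theorem together with the fact, recalled in Section~\ref{sec:Defs}, that Bernoulli measures are ergodic, and both of your two reductions to that theorem are fine.

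Your closing remark, however, is wrong, and it misses where shift invariance actually enters. The theorem's statement omits shift invariance of $\nu$, but its proof starts from ``it is necessarily true that $\mu \not\gg \nu$'', and this step rests on the fact from Section~\ref{sec:Defs} that an \emph{ergodic} $\mu$ with $\mu \gg \nu$ and $\nu$ \emph{shift invariant} forces $\mu = \nu$; without shift invariance of $\nu$ that implication, and with it the theorem, fails. Concretely, let $\mu_p$ be the uniform Bernoulli measure on $\{0,1\}^\Z$ and let $\nu$ have density $f$ with respect to $\mu_p$, where $f = \tfrac12$ on $[0]_0$ and $f = \tfrac32$ on $[1]_0$. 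Then $\nu$ is a probability measure, $\nu \neq \mu_p$ (e.g.\ $\nu([1]_0) = \tfrac34$), and $\nu$ is not shift invariant, yet $\tfrac12\,\mu_p(B) \leq \nu(B) \leq \tfrac32\,\mu_p(B)$ for every Borel set $B$, so $\tfrac12\,\delta^{\mu_p} \leq \delta^\nu \leq \tfrac32\,\delta^{\mu_p}$ and the two pseudometrics induce the very same topology on $\CA$. So it is not true that among all probability measures only $\mu_p$ induces the topology attached to $\mu_p$; the restriction to shift invariant measures in the second sentence of the corollary is essential, not redundant, and your application of the theorem in that generality applies it outside the range in which its proof is valid.
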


We now reprove most of the combinatorial results of Section~\ref{sec:Combinatorics} in the measure theoretic framework. In some cases, the proofs become somewhat shorter and cleaner.

\begin{definition}
Let $\mu$ be a measure on $\Sigma^\Z$. Then $\PRES_\mu$ denotes the set of $\mu$-preserving CA, that is,
\[ \PRES_\mu = \{ c \in \CA \;|\; \mu(c^{-1}(B)) = \mu(B) \text{ for all } \mu \text{-measurable } B \subseteq \Sigma^\Z \}. \]
\end{definition}

The following theorems generalize the results for the uniform Bernoulli measure $\mu$, and are proved using the same ideas. Note that in this case, $\SUR = \PRES_\mu$.

\begin{theorem}
Let $\mu$ be a measure on $\Sigma^\Z$. The restriction of $(\circ)$ to $\CA \times \PRES_\mu$ is continuous with respect to $\delta^\mu$.
\end{theorem}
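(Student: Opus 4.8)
The plan is to reduce the claim to sequences and to the clopen-set reformulation of $\delta^\mu$-convergence noted just after the definition of $\delta^\mu$, namely that $e_i \stackrel{\mu}{\rightarrow} e$ holds iff $\mu(e^{-1}(C) \triangle e_i^{-1}(C)) \to 0$ for every clopen $C \subseteq \Sigma^\Z$. Since $\delta^\mu$ is a pseudometric, the topology it induces on $\CA$, and hence the product topology on $\CA \times \PRES_\mu$, is first countable, so it suffices to verify sequential continuity. Accordingly, I would take sequences $c_i \stackrel{\mu}{\rightarrow} c$ in $\CA$ and $d_i \stackrel{\mu}{\rightarrow} d$ with every $d_i \in \PRES_\mu$, and aim to show $c_i \circ d_i \stackrel{\mu}{\rightarrow} c \circ d$; by the reformulation, this comes down to proving, for an arbitrary fixed clopen $C$, that $\mu\left((c_i \circ d_i)^{-1}(C) \triangle (c \circ d)^{-1}(C)\right) \to 0$.

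The next step is to compare $(c_i \circ d_i)^{-1}(C) = d_i^{-1}(c_i^{-1}(C))$ with $(c \circ d)^{-1}(C) = d^{-1}(c^{-1}(C))$ by routing through the intermediate set $d_i^{-1}(c^{-1}(C))$. From the set inclusion $A \triangle B \subseteq (A \triangle D) \cup (D \triangle B)$, subadditivity of $\mu$, and the fact that taking preimages commutes with symmetric differences, one gets
\[
\mu\left((c_i \circ d_i)^{-1}(C) \triangle (c \circ d)^{-1}(C)\right) \le \mu\left(d_i^{-1}(c_i^{-1}(C) \triangle c^{-1}(C))\right) + \mu\left(d_i^{-1}(c^{-1}(C)) \triangle d^{-1}(c^{-1}(C))\right).
\]
For the first summand, $d_i \in \PRES_\mu$ lets me replace $\mu(d_i^{-1}(\cdot))$ by $\mu(\cdot)$, so it equals $\mu(c_i^{-1}(C) \triangle c^{-1}(C))$, which tends to $0$ since $c_i \stackrel{\mu}{\rightarrow} c$ and $C$ is clopen. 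For the second summand, I would observe that $E := c^{-1}(C)$ is clopen, being the preimage of a clopen set under the continuous map $c$, whence it tends to $0$ directly from $d_i \stackrel{\mu}{\rightarrow} d$ applied to the fixed clopen set $E$. Combining the two estimates yields the claim.

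The step I expect to be the real point is the choice of intermediate set. The symmetric-looking candidate $d^{-1}(c_i^{-1}(C))$ leads nowhere, because it would require testing the convergence $d_i \stackrel{\mu}{\rightarrow} d$ against the clopen set $c_i^{-1}(C)$, which drifts with $i$, whereas $\delta^\mu$-convergence only controls symmetric differences of a \emph{fixed} clopen set. Routing through $d_i^{-1}(c^{-1}(C))$ is what makes the argument close, at the price of the term $d_i^{-1}(c_i^{-1}(C) \triangle c^{-1}(C))$, and that term is exactly where measure-preservation of $d_i$ is consumed. This is the measure-theoretic shadow of the balance-property step in the earlier proof that $(\circ)$ is continuous on $\CA \times \SUR$: there, surjectivity of $d_i$ let one pass from $|B|$ to $|D^{c_i}_c|$; here, $\mu$-preservation of $d_i$ plays precisely the same role. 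One can also note, as a free byproduct, that this argument uses nothing about $\mu$ beyond being a probability measure — in particular no shift invariance — so it covers the fully general statement.
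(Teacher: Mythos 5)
Your proof is correct and is essentially the paper's own argument: the same reduction to sequences and to the clopen-set reformulation, the same decomposition through the intermediate set $d_i^{-1}(c^{-1}(C))$, with $\mu$-preservation of $d_i$ absorbing the drifting term via $\mu\bigl(d_i^{-1}(c_i^{-1}(C)\triangle c^{-1}(C))\bigr)=\mu\bigl(c_i^{-1}(C)\triangle c^{-1}(C)\bigr)$ and the fixed clopen set $c^{-1}(C)$ handling the other term. The only caveat concerns your closing remark: the reformulation ``$e_i\stackrel{\mu}{\rightarrow} e$ iff $\mu(e^{-1}(C)\triangle e_i^{-1}(C))\rightarrow 0$ for every clopen $C$'', which both you and the paper invoke, does use shift-invariance of $\mu$ in the direction ``$\delta^\mu$-convergence implies convergence on every fixed clopen set'' (such a set, e.g.\ $c^{-1}(C)$, constrains coordinates away from $0$, and $\delta^\mu$ only measures disagreement at coordinate $0$), so the ``no shift invariance needed'' byproduct is not quite free---though the paper glosses over exactly the same point.
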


\begin{proof}
Let $c_i \in \CA$ and $d_i \in \PRES_\mu$ for all $i$, and $c_i \stackrel{\mu}{\rightarrow} c$, $d_i \stackrel{\mu}{\rightarrow} d$. We will prove that $c_i \circ d_i \stackrel{\mu}{\rightarrow} c \circ d$.

For that, let $C \subseteq \Sigma^\Z$ be a clopen set. Denote
\[ A_i = d_i^{-1}(c_i^{-1}(C)) \triangle d_i^{-1}(c^{-1}(C)), \]
the $d_i$-preimage of the set where $c_i$ and $c$ differ w.r.t. $C$, and
\[ B_i = d^{-1}(c^{-1}(C)) \triangle d_i^{-1}(c^{-1}(C)), \]
the set where $d_i$ and $d$ differ w.r.t $c^{-1}(C)$. It is clear that
\[ d_i^{-1}(c_i^{-1}(C)) \triangle d^{-1}(c^{-1}(C)) \subseteq A_i \cup B_i. \]

Since $d_i$ preserves $\mu$ and $c^{-1}(C)$ is a clopen set, we have that
\[ \mu(A_i) = \mu(c_i^{-1}(C) \triangle c^{-1}(C)) \rightarrow 0 \]
and
\[ \mu(B_i) = \mu(d^{-1}(c^{-1}(C)) \triangle d_i^{-1}(c^{-1}(C))) \rightarrow 0, \]
from which the claim then follows. 
\end{proof}

\begin{theorem}
If $\mu$ is a measure and $d \in \CA$, then $c \mapsto d \circ c$ is continuous with respect to $\delta^\mu$.
\end{theorem}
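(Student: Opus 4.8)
The plan is to route the argument through preimages of clopen sets rather than through difference sets directly. Since $\delta^\mu$ is a pseudometric, the topology it induces is pseudometrizable and in particular first-countable, so it suffices to prove that $c \mapsto d \circ c$ is sequentially continuous. Accordingly, I would fix a sequence $(c_i)$ with $c_i \stackrel{\mu}{\rightarrow} c$ and aim to show $d \circ c_i \stackrel{\mu}{\rightarrow} d \circ c$.

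First I would invoke the characterization recorded just before Proposition~\ref{prop:Finer}: $c_i \stackrel{\mu}{\rightarrow} c$ is equivalent to $\mu(c^{-1}(C) \triangle c_i^{-1}(C)) \rightarrow 0$ for every clopen $C \subseteq \Sigma^\Z$, and likewise the desired conclusion is equivalent to $\mu((d \circ c_i)^{-1}(C) \triangle (d \circ c)^{-1}(C)) \rightarrow 0$ for every clopen $C$. The key observation is then the identity $(d \circ c_i)^{-1}(C) = c_i^{-1}(d^{-1}(C))$, together with the fact that $d^{-1}(C)$ is again clopen, because $d$ is continuous and $C$ is clopen. Writing $C' = d^{-1}(C)$, the symmetric difference to be controlled is exactly $c_i^{-1}(C') \triangle c^{-1}(C')$, and its $\mu$-measure tends to $0$ by the hypothesis applied to the clopen set $C'$. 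This completes the proof.

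I do not expect a real obstacle here; the one point to be careful about is that the statement permits an \emph{arbitrary} measure $\mu$, not necessarily shift invariant, so one should resist the temptation to mimic Lemma~\ref{rcomp} by bounding $[D^{d \circ c}_{d \circ c'}]_0$ by a finite union of shifts of $[D^c_{c'}]_0$ and then using shift invariance — that route only delivers (Lipschitz) continuity in the shift-invariant case. Passing through preimages of clopen sets sidesteps this entirely. As a closing remark I would note that when $\mu$ is additionally shift invariant, the inclusion $[D^{d \circ c}_{d \circ c'}]_0 \subseteq \bigcup_{j \in N(d)} \sigma^{-j}\!\left([D^c_{c'}]_0\right)$ does hold and yields the sharper estimate $\delta^\mu(d \circ c, d \circ c') \leq (2r(d)+1)\,\delta^\mu(c, c')$, recovering Lemma~\ref{rcomp} as the special case of the uniform Bernoulli measure.
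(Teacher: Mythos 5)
Your argument is essentially identical to the paper's proof: both fix a clopen $C$, note that $d^{-1}(C)$ is again clopen, and reduce $\mu((d \circ c_i)^{-1}(C) \triangle (d \circ c)^{-1}(C))$ to $\mu(c_i^{-1}(d^{-1}(C)) \triangle c^{-1}(d^{-1}(C))) \rightarrow 0$ via the characterization of $\stackrel{\mu}{\rightarrow}$ stated before Proposition~\ref{prop:Finer}. The extra remarks on shift invariance and on recovering Lemma~\ref{rcomp} are fine but do not change the route, so the proposal is correct and matches the paper's approach.
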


\begin{proof}
Let $c_i \stackrel{\mu}{\rightarrow} c$ and $C$ a clopen set. Since $d^{-1}(C)$ is also clopen, we have
\begin{align*}
& \phantom{M} \mu((d \circ c_i)^{-1}(C) \triangle (d \circ c)^{-1}(C)) \\
& = \mu(c_i^{-1}(d^{-1}(C)) \triangle c^{-1}(d^{-1}(C))) \rightarrow 0.
\end{align*} 
\end{proof}

\begin{theorem}
If $\mu$ is a measure, then $(\cdot)^{-1}$ is continuous on $\PRES_\mu \cap \REV$.
\end{theorem}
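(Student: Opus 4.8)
The plan is to follow the proof of continuity of $(\cdot)^{-1}$ on $\REV$ from Section~\ref{sec:Combinatorics} almost verbatim, using the theorem that $c \mapsto e \circ c$ is $\delta^\mu$-continuous for every fixed $e \in \CA$ in place of Lemma~\ref{rcomp}, and $\mu$-preservation in place of the balance property. As a preliminary I would note that $(\cdot)^{-1}$ really does act on $\PRES_\mu \cap \REV$: a bijective CA is reversible, and if $c$ is bijective and $\mu$-preserving, then for measurable $B$ we have $\mu((c^{-1})^{-1}(B)) = \mu(c(B)) = \mu(c^{-1}(c(B))) = \mu(B)$ (all the relevant images and preimages are Borel, being preimages of Borel sets under the continuous maps $c$ and $c^{-1}$), so $c^{-1} \in \PRES_\mu$ as well.

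Fix $c \in \PRES_\mu \cap \REV$, and let $d \in \PRES_\mu \cap \REV$. The heart of the argument is the inequality
\[ \delta^\mu(c^{-1}, d^{-1}) \leq \delta^\mu(c^{-1} \circ c, \; c^{-1} \circ d). \]
To prove it I argue pointwise, as in Section~\ref{sec:Combinatorics}: if $x \in [D^{c^{-1}}_{d^{-1}}]_0$, that is $c^{-1}(x)_0 \neq d^{-1}(x)_0$, then setting $z = d^{-1}(x)$ (well-defined since $d$ is bijective) gives $(c^{-1} \circ d)(z)_0 = c^{-1}(x)_0 \neq d^{-1}(x)_0 = z_0 = \ID(z)_0$, so $z \in [D^{\ID}_{c^{-1} \circ d}]_0 = [D^{c^{-1} \circ c}_{c^{-1} \circ d}]_0$. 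Hence the $d$-preimage of the Borel set $[D^{c^{-1}}_{d^{-1}}]_0$ is contained in $[D^{c^{-1} \circ c}_{c^{-1} \circ d}]_0$, and applying $\mu(d^{-1}(A)) = \mu(A)$, valid because $d \in \PRES_\mu$, yields the displayed inequality.

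To conclude, take $d_i \stackrel{\mu}{\rightarrow} c$ with every $d_i \in \PRES_\mu \cap \REV$. The $\delta^\mu$-continuity of left composition by the fixed CA $c^{-1}$ gives $c^{-1} \circ d_i \stackrel{\mu}{\rightarrow} c^{-1} \circ c = \ID$, i.e. $\delta^\mu(c^{-1} \circ c, c^{-1} \circ d_i) \to 0$, and then the inequality above forces $\delta^\mu(c^{-1}, d_i^{-1}) \to 0$. Since the topology induced by $\delta^\mu$ is first-countable (it comes from a pseudometric), this sequential statement is exactly continuity at $c$.

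I do not expect a genuine obstacle: there is no hard estimate, and the only points requiring care are (i) Borel-measurability of the sets to which $\mu$-preservation is applied (immediate, since $d$ and $d^{-1}$ are continuous), and (ii) bookkeeping of which factor is ``fixed'' when invoking the left-composition theorem. For an even shorter route one can bypass difference sets entirely: since reversible CA are homeomorphisms, $c(C)$ is clopen whenever $C$ is, and for $d_i \in \PRES_\mu \cap \REV$ one computes $\mu(c(C) \triangle d_i(C)) = \mu(d_i^{-1}(c(C)) \triangle C) = \mu(d_i^{-1}(c(C)) \triangle c^{-1}(c(C))) \rightarrow 0$ by $d_i \stackrel{\mu}{\rightarrow} c$ applied to the clopen set $c(C)$; by the clopen-set characterization of $\stackrel{\mu}{\rightarrow}$, this says $d_i^{-1} \stackrel{\mu}{\rightarrow} c^{-1}$.
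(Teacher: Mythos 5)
Your argument is correct. Your main route is not quite the paper's: the paper proves this theorem by a direct three-line computation with clopen sets, namely for clopen $C$ it sets $D = c(C)$ (clopen because $c$ is a homeomorphism) and uses bijectivity plus $\mu$-preservation of $c_i$ to get $\mu\bigl((c_i^{-1})^{-1}(C) \triangle (c^{-1})^{-1}(C)\bigr) = \mu\bigl(c_i^{-1}(D) \triangle c^{-1}(D)\bigr) \rightarrow 0$, then concludes via the clopen-set characterization of $\stackrel{\mu}{\rightarrow}$. You instead transplant the Section~\ref{sec:Combinatorics} proof: the pointwise containment $d^{-1}\bigl([D^{c^{-1}}_{d^{-1}}]_0\bigr) \subseteq [D^{c^{-1}\circ c}_{c^{-1}\circ d}]_0$ together with $\mu$-preservation of $d$ gives $\delta^\mu(c^{-1},d^{-1}) \leq \delta^\mu(c^{-1}\circ c,\, c^{-1}\circ d)$, and the continuity of left composition by the fixed CA $c^{-1}$ finishes it. This is exactly the ``same ideas, generalized'' path the paper alludes to for this section, and it has the small virtue of staying entirely at the level of the pseudometric $\delta^\mu$ (the clopen-set characterization enters only through the quoted left-composition theorem), at the cost of invoking that theorem as a lemma; the paper's version is self-contained and shorter. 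Your closing ``even shorter route'' is, in fact, precisely the paper's proof read in the other direction, so you have both arguments. One cosmetic remark: the preliminary observation that $c^{-1} \in \PRES_\mu$ is not actually needed for your main chain of inequalities (only $d_i \in \PRES_\mu$ is used), though it is harmless and matches a remark the paper also makes.
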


\begin{proof}
Let $c_i \stackrel{\mu}{\rightarrow} c$ in $\PRES_\mu \cap \REV$, and $C$ a clopen set. Clearly, also $c_i^{-1} \in \PRES_\mu$ for all $i$. Denote $D=c(C)$, which is also clopen. Now
\begin{align*}
& \phantom{M} \mu((c_i^{-1})^{-1}(C) \triangle (c^{-1})^{-1}(C)) \\
&= \mu(c_i(C) \triangle c(C)) \\
&= \mu(c_i(c^{-1}(D)) \triangle D) \\
&= \mu(c^{-1}(D) \triangle c_i^{-1}(D)) \rightarrow 0.
\end{align*} 
\end{proof}

\begin{theorem}
Let $\mu$ be a measure. Then $\PRES_\mu$ is closed in $\CA_\mu$.
\end{theorem}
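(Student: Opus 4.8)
The plan is to test $\mu$-preservation on clopen sets only, and then invoke the symmetric-difference characterization of $\delta^\mu$-convergence recorded just after the definition of $\delta^\mu$, namely that $c_i \stackrel{\mu}{\rightarrow} c$ iff $\mu(c^{-1}(C) \triangle c_i^{-1}(C)) \rightarrow 0$ for every clopen $C \subseteq \Sigma^\Z$. Since $\CA_\mu$ carries a pseudometric it is first-countable, so it suffices to show that $\PRES_\mu$ is sequentially closed: if $c_i \in \PRES_\mu$ for all $i$ and $c_i \stackrel{\mu}{\rightarrow} c$, then $c \in \PRES_\mu$.

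The first step is a reduction: a CA $c$ lies in $\PRES_\mu$ as soon as $\mu(c^{-1}(C)) = \mu(C)$ for every clopen $C \subseteq \Sigma^\Z$. Indeed, $B \mapsto \mu(c^{-1}(B))$ and $B \mapsto \mu(B)$ are both probability measures on the Borel sets of $\Sigma^\Z$ (the former because $c$ is continuous, hence Borel), and the clopen sets of $\Sigma^\Z$ form an algebra, closed under finite intersections, generating the Borel $\sigma$-algebra; by uniqueness of measure extension (equivalently, Dynkin's $\pi$-$\lambda$ theorem), two probability measures that agree on such an algebra agree on all Borel sets. Hence it is enough to verify preservation on clopen test sets.

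The second step is the limiting argument. Fix a clopen set $C$. From $c_i \stackrel{\mu}{\rightarrow} c$ we get $\mu(c^{-1}(C) \triangle c_i^{-1}(C)) \rightarrow 0$. Combining this with the elementary bound $\abs{\mu(A) - \mu(B)} \leq \mu(A \triangle B)$ gives
\[ \abs{\mu(c^{-1}(C)) - \mu(c_i^{-1}(C))} \leq \mu(c^{-1}(C) \triangle c_i^{-1}(C)) \longrightarrow 0, \]
so $\mu(c^{-1}(C)) = \lim_i \mu(c_i^{-1}(C))$. Since each $c_i$ preserves $\mu$ we have $\mu(c_i^{-1}(C)) = \mu(C)$ for all $i$, whence $\mu(c^{-1}(C)) = \mu(C)$. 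As $C$ was an arbitrary clopen set, the reduction of the first step yields $c \in \PRES_\mu$, and $\PRES_\mu$ is closed in $\CA_\mu$.

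There is no genuine obstacle here; the only point needing a moment's care is the reduction from arbitrary Borel test sets to clopen ones, which is the routine appeal to uniqueness of measure extension above, using that $\Sigma^\Z$ is zero-dimensional so that finite clopen cylinders form a base. (One could also note that $\delta^\mu(c,c') = 0$ forces $c$ and $c'$ to agree $\mu$-almost everywhere, hence to preserve exactly the same measures, so the statement is insensitive to the pseudometric having nontrivial kernel; this is in any case subsumed by the sequential argument applied to the constant sequence $c_i = c$.)
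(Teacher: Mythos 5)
Your proof is correct, but it reaches the crucial reduction by a different route than the paper. You prove the positive statement that agreement of $\mu$ and the pushforward $B \mapsto \mu(c^{-1}(B))$ on the clopen sets already forces $c \in \PRES_\mu$, via uniqueness of measure extension (Dynkin's $\pi$-$\lambda$ theorem applied to the clopen algebra, which generates the Borel $\sigma$-algebra), after which closedness is an immediate limit computation from $\abs{\mu(A)-\mu(B)} \leq \mu(A \triangle B)$. The paper argues contrapositively instead: for $c \notin \PRES_\mu$ it manufactures a single clopen witness $C$ with $\mu(c^{-1}(C)) > \mu(C)$, using the regularity of $\mu$ (recorded in the definitions section) to pass from a Borel set to an open set $U$, and then an increasing clopen exhaustion $C_j$ of $U$ to pass from $U$ to a clopen set; the contradiction with $\mu(c_i^{-1}(C)) \rightarrow \mu(c^{-1}(C)) > \mu(C)$ is then the same limit argument as yours. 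Your version buys brevity and a cleanly reusable lemma (preservation of $\mu$ can be tested on clopen sets alone), at the cost of invoking a uniqueness-of-extension theorem the paper has not set up; the paper's version is self-contained given its stated regularity fact and yields an explicit clopen set with a quantitative gap, from which (for shift-invariant $\mu$) one can even extract a ball around $c$ disjoint from $\PRES_\mu$. Note that both arguments lean equally on the remark following the definition of $\delta^\mu$, namely that $\delta^\mu$-convergence gives $\mu(c^{-1}(C) \triangle c_i^{-1}(C)) \rightarrow 0$ for every clopen $C$ (the direction that tacitly uses shift invariance of $\mu$), so you are on the same footing as the paper there; and your appeal to sequential closedness is harmless, since a pseudometric space is first-countable.
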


\begin{proof}
Let $c \notin \PRES_\mu$. Now we have a measurable set $B$ such that $\mu(c^{-1}(B)) > \mu(B)$. Since $\mu$ is regular, we can find an open set $U \supseteq B$ such that $\mu(U) < \mu(c^{-1}(B)) \leq \mu(c^{-1}(U))$. Now $U$ can in turn be approximated with clopen sets $C_1 \subseteq C_2 \subseteq \ldots \subseteq U$ such that $U-C_j \searrow \emptyset$, which implies $\mu(C_j) \rightarrow \mu(U)$ and $\mu(c^{-1}(C_j)) \rightarrow \mu(c^{-1}(U))$. Take $j$ large enough that $\mu(C_j) \leq \mu(U) < \mu(c^{-1}(C_j))$ and denote $C=C_j$.

Suppose then that $c_i \in \PRES_\mu$ for all $i \in \N$, and $c_i \stackrel{\mu}{\rightarrow} c$. Now in particular $\mu(c_i^{-1}(C)) \rightarrow \mu(c^{-1}(C)) > \mu(C)$. But this contradicts the assumption $c_i \in \PRES_\mu$. 
\end{proof}

Next, we show how the difference set spaces can also be obtained by integrating different distance functions over the space $\Sigma^\Z$. In the case of the Besicovitch pseudometric (defined below), we have an exact correspondence with the respective $\delta^\mu$.

\begin{theorem}
Let $\mu$ be a shift invariant measure, and define, for all $c,d \in \CA$,
\[ \delta_C^\mu(c,d) = \int_{\Sigma^\Z} d_C(c(x),d(x)) \; d\mu(x). \]
Then $\delta^\mu$ and $\delta_C^\mu$ are uniformly equivalent pseudometrics.
\end{theorem}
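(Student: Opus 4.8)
The plan is to prove the stronger fact that $\delta_C^\mu$ and $\delta^\mu$ are in fact \emph{proportional}, namely $\delta_C^\mu(c,d) = 3\,\delta^\mu(c,d)$ for all $c,d \in \CA$; uniform (indeed bi-Lipschitz) equivalence is then immediate.

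First I would unwind the definitions pointwise. Fix a common radius $r$ of $c$ and $d$, and recall that $[D^c_d]_0 = \{ y \in \Sigma^\Z \;|\; c(y)_0 \neq d(y)_0 \}$, a clopen set that does not depend on the chosen radius. Since $c \circ \sigma = \sigma \circ c$ (and likewise for $d$), we have $c(x)_i = c(\sigma^i x)_0$ and $d(x)_i = d(\sigma^i x)_0$, so $c(x)_i \neq d(x)_i$ holds exactly when $\sigma^i(x) \in [D^c_d]_0$. Therefore, for every $x \in \Sigma^\Z$,
\[ d_C(c(x), d(x)) = \sum_{i \in \Z} 2^{-|i|} \, \mathbf{1}_{[D^c_d]_0}(\sigma^i x). \]

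Next I would integrate this identity against $\mu$. The summands are nonnegative and measurable, so by Tonelli's theorem (equivalently, monotone convergence applied to the partial sums) the sum and the integral may be interchanged:
\[ \delta_C^\mu(c,d) = \sum_{i \in \Z} 2^{-|i|} \int_{\Sigma^\Z} \mathbf{1}_{[D^c_d]_0}(\sigma^i x) \, d\mu(x) = \sum_{i \in \Z} 2^{-|i|} \, \mu\big(\sigma^{-i}([D^c_d]_0)\big). \]
Shift invariance of $\mu$ in the form $\mu(C) = \mu(\sigma C)$ upgrades, since $\sigma$ is a Borel automorphism of $\Sigma^\Z$, to $\mu(\sigma^{-i}(A)) = \mu(A)$ for every Borel set $A$ and every $i \in \Z$. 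Hence each term equals $\mu([D^c_d]_0) = \delta^\mu(c,d)$, and since $\sum_{i \in \Z} 2^{-|i|} = 3$ we conclude $\delta_C^\mu(c,d) = 3\,\delta^\mu(c,d)$, which finishes the proof.

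The only place requiring a bit of care — the ``main obstacle'', such as it is — is the interchange of summation and integration; but because $d_C \geq 0$ and the indicator functions are nonnegative, monotone convergence applies with no integrability hypothesis whatsoever, so there is no genuine difficulty. It is also worth stating explicitly that shift invariance as given ($\mu(C) = \mu(\sigma C)$) entails $\mu(\sigma^{-i} C) = \mu(C)$ for all $i \in \Z$, which is where invertibility of $\sigma$ is used.
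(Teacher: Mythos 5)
Your proof is correct, and it is a mild strengthening of what the paper does rather than a literally identical argument. The paper proves two separate inequalities: first $\delta_C^\mu(c,d) \geq \mu([D^c_d]_0)$ by restricting the integral to $[D^c_d]_0$, where the integrand is at least $2^{-0}=1$ (note this direction uses no shift invariance at all), and second $\delta_C^\mu(c,d) \leq \sum_{i\in\Z} 2^{-|i|}\mu([D^c_d]_i) = 3\mu([D^c_d]_0)$, which is essentially your computation written as an upper bound. You instead observe that the pointwise identity $d_C(c(x),d(x)) = \sum_{i\in\Z} 2^{-|i|}\mathbf{1}_{[D^c_d]_0}(\sigma^i x)$ (valid because $c,d$ commute with $\sigma$ and $[D^c_d]_0 = \{y \mid c(y)_0 \neq d(y)_0\}$ is radius-independent), together with Tonelli and the fact that shift invariance of $\mu$ plus invertibility of $\sigma$ give $\mu(\sigma^{-i}A)=\mu(A)$, yields the exact proportionality $\delta_C^\mu = 3\,\delta^\mu$; this gives both bounds at once with the sharp constants $\delta^\mu \leq \delta_C^\mu/3$ rather than the paper's $\delta^\mu \leq \delta_C^\mu \leq 3\delta^\mu$, and all steps (nonnegativity for the interchange, measurability of the clopen indicator composed with $\sigma^i$) are correctly justified. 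The only thing the paper's route buys in exchange is the observation, implicit in its first inequality, that $\delta_C^\mu \geq \delta^\mu$ holds for an arbitrary Borel probability measure, whereas your exact identity genuinely needs shift invariance in both directions; for the theorem as stated this makes no difference.
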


\begin{proof}
First, note that
\[ \delta_C^\mu(c,d) \geq \int_{[D^c_d]_0} d_C(c(x),d(x)) \; d\mu(x) \geq \mu([D^c_d]_0). \]
Second, we have
\[ \delta_C^\mu(c,d) = \int_{\Sigma^\Z} \sum_{c(x)_i \neq d(x)_i} 2^{-|i|} \; d\mu(x) \leq \sum_{i \in \Z} \mu([D^c_d]_i) 2^{-|i|} = 3\mu([D^c_d]_0). \] 
\end{proof}

\begin{definition}
Let $x,y \in \Sigma^\Z$. The \emph{Besicovitch distance} of $x$ and $y$ is
\[ d_B(x,y) = \limsup_{n \rightarrow \infty} \frac{\abs{\{ \abs{i} \leq n \;|\; x_i \neq y_i \}}}{2n+1}, \]
that is, the limit superior of the density of their differences as the size of the observation window grows without bound.
\end{definition}

The Besicovitch distance is a pseudometric and defines a topology on $\Sigma^\Z$ different from the usual Cantor topology. The following theorem gives a connection between the pseudometrics $\delta^\mu$ and $d_B$.

\begin{theorem}
Let $\mu$ be a shift invariant measure, and define, for all $c,d \in \CA$,
\[ \delta_B^\mu(c,d) = \int_{\Sigma^\Z} d_B(c(x),d(x)) \; d\mu(x). \]
Then we have $\delta^\mu = \delta_B^\mu$.
\end{theorem}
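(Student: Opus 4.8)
The plan is to compute $\delta_B^\mu(c,d)$ by applying Birkhoff's pointwise ergodic theorem to the right integrand, and to recognize the resulting expression as $\delta^\mu(c,d) = \mu([D^c_d]_0)$. Fix $c,d \in \CA$ and let $r$ be a common radius, so that $D^c_d \subseteq \Sigma^{N(r)}$ and $[D^c_d]_0 = \{x \in \Sigma^\Z \mid c(x)_0 \neq d(x)_0\}$. Let $f = \mathbf{1}_{[D^c_d]_0}$, the indicator of the (clopen, hence Borel and $\mu$-integrable) set where $c$ and $d$ disagree at coordinate $0$. The key observation is that $f(\sigma^i(x)) = 1$ exactly when $c(\sigma^i(x))_0 \neq d(\sigma^i(x))_0$, which by shift-commutation of cellular automata is equivalent to $c(x)_i \neq d(x)_i$. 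Hence for every $n$,
\[ \frac{1}{2n+1}\sum_{i=-n}^n f(\sigma^i(x)) = \frac{\abs{\{\abs{i}\le n \mid c(x)_i \neq d(x)_i\}}}{2n+1}. \]

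The next step is to deal with the $\limsup$ in the definition of $d_B$. By Birkhoff's theorem, the averages on the left converge for $\mu$-almost every $x$ to a limit $f^*(x)$; on this full-measure set the $\limsup$ in $d_B(c(x),d(x))$ is an honest limit and equals $f^*(x)$. Therefore $d_B(c(x),d(x)) = f^*(x)$ for $\mu$-almost every $x$, so the two functions have the same integral against $\mu$. Using the second conclusion of Birkhoff's theorem, $\int_{\Sigma^\Z} f^* \, d\mu = \int_{\Sigma^\Z} f \, d\mu = \mu([D^c_d]_0) = \delta^\mu(c,d)$. Combining, $\delta_B^\mu(c,d) = \int d_B(c(x),d(x))\,d\mu = \int f^*\,d\mu = \delta^\mu(c,d)$, as claimed.

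The only points requiring minor care are: (i) measurability of $x \mapsto d_B(c(x),d(x))$, which is not needed on its own because we only ever integrate it as a function agreeing almost everywhere with the measurable $f^*$; and (ii) the fact that $d_B(c(x),d(x))$ and $f^*(x)$ agree rather than merely one dominating the other — this is immediate once the averaging window is the symmetric interval $N(n)$ in both definitions, which it is. The main (mild) obstacle is thus purely bookkeeping: making sure the window $[-n,n]$ in the Besicovitch density matches the symmetric Birkhoff average $\frac{1}{2n+1}\sum_{i=-n}^{n}$, and invoking shift-invariance of $\mu$ only implicitly through Birkhoff (no extra hypothesis on $\mu$ beyond shift-invariance is used). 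No ergodicity is required, since we integrate $f^*$ rather than claim it is constant.
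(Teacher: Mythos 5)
Your proposal is correct and follows essentially the same route as the paper: apply Birkhoff's theorem to the indicator of $[D^c_d]_0$, identify the Birkhoff averages with the difference densities defining $d_B(c(x),d(x))$ via shift-commutation, and integrate. The handling of the $\limsup$ (it becomes a genuine limit $\mu$-almost everywhere) and the final equality of integrals match the paper's argument.
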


\begin{proof}
Let $c,d \in \CA$ with common radius $r$. Now the function $\textbf{1}_{[D^c_d]_0}$ is $\mu$-integrable, so by the ergodic theorem we have that
\begin{align*}
\textbf{1}_{[D^c_d]_0}^*(x) =& \lim_{n \rightarrow \infty} \frac{1}{2n+1} \sum_{i=-n}^n \textbf{1}_{[D^c_d]_i}(x) \\
=& \lim_{n \rightarrow \infty} \frac{\abs{\{ i \in N(n) \;|\; x_{N(r)+i} \in D^c_d \}}}{2n+1}
\end{align*}
is defined for $\mu$-almost all $x$, and
\[
\int_{\Sigma^\Z} \textbf{1}_{[D^c_d]_0}^* d\mu = \int_{\Sigma^\Z} \textbf{1}_{[D^c_d]_0} d\mu = \mu([D^c_d]_0).
\]
The claim directly follows, since now $\textbf{1}_{[D^c_d]_0}^*(x) = d_B(c(x),d(x))$ for $\mu$-almost all $x \in \Sigma^\Z$. 
\end{proof}

\section{Future Work}
\label{sec:Future}

In the future, it would be interesting to consider connections between dynamical notions such as transitivity, mixingness, sensitivity and entropy by proving that under some additional natural constraints on the topologies considered, closedness of sets of cellular automata with a (possibly parametrized) dynamical property, and continuity of certain dynamical invariants automatically imply other closedness and continuity results in any topology. Such results, if any can be proved, could perhaps imply (or inspire) interesting new connections outside our topology framework between dynamical notions. We are also interested in whether entropy can be made continuous with a natural topology based on the long-term behavior of cellular automata.

We would also like to study the \emph{sequentialization} of the pointwise topology, the topology whose closed sets are exactly the sequentially closed sets of the pointwise topology. This is easily seen to define a sequential topology. This topology is finer than the pointwise topology and any of the difference set topologies, and both $\REV$ and $\SUR$ are closed sets (by Proposition~\ref{prop:REVYesClosed} and Theorem~\ref{thm:SURClosed}, respectively). However, it can be shown that the space is still perfect.

\bibliographystyle{eptcs}
\bibliography{bib}{}

\end{document}